\setlist{nosep}
\newtheorem{theorem}{Theorem}
\newtheorem{lemma}{Lemma}
\newtheorem{corollary}{Corollary}
\newcommand{\ket}[1]{\mbox{$ | #1 \rangle $}}
\newcommand\bZ {{\mathbb Z}}
\newcommand\beq {\begin{equation}}
\newcommand\eeq {\end{equation}}
\newcommand\beqa {\begin{equation}\begin{array}}
\newcommand\eeqa {\end{array}\end{equation}}
\newcommand\bal {\begin{align}}
\newcommand\eal {\end{align}}
\newcommand{\bea}{\begin{eqnarray}}
\newcommand{\eea}{\end{eqnarray}}
\newcommand{\innerproduct}[2]{\langle #1| #2\rangle}
\newcommand{\ztzt}{\mathbb{Z}_2 \times \mathbb{Z}_2}
\newcommand{\ztwo}{\mathbb{Z}_2}
\newcommand{\hgc}{H^2(G,U(1))}
\newcommand{\hgcinput}[1]{H^2(#1,U(1))}
\theoremstyle{plain}
\theoremstyle{definition}
\theoremstyle{remark}
\begin{document}

\title{An elementary proof of 1D LSM theorems}

\author{Abhishodh Prakash}
\email{abhishodh.prakash@icts.res.in}
\affiliation{International Centre for Theoretical Sciences (ICTS-TIFR),
Tata Institute of Fundamental Research,
Shivakote, Hesaraghatta Hobli,
Bengaluru 560089, India}
\date{\today}

\begin{abstract}
The Lieb-Schultz-Mattis (LSM) theorem and its generalizations forbids the existence of a unique gapped ground state in the presence of certain lattice and internal symmetries and thus imposes powerful constraints on the low energy properties of quantum many-body systems. We provide an elementary proof of a class of generalized LSM theorems in 1D using matrix product state representations and the representation theory of groups. 
\end{abstract}

\maketitle
 
\section{Introduction}
One of the central goals of condensed matter physics is to determine the collective properties of a large number interacting particles that make up various materials of interest. Of particular interest are \emph{quantum materials} where the particle interactions are dominated by the laws of quantum mechanics. Recall that to specify a quantum system, we need to specify two classes of information:
\begin{enumerate}
	\item \emph{Kinematics}: Spatial dimensions, Hilbert space, particle statistics and global symmetries.
	\item \emph{Dynamics}: The Hamiltonian which is local and invariant under symmetries.
\end{enumerate}
The properties of quantum materials are encoded in the low energy properties of its Hamiltonian i.e. the ground state and `light' excitations. However, determining these is generally a very hard task.  This is because the Hilbert space of the system grows \emph{exponentially} with the number of particles and in thermodynamic limit, as the number of particles approaches infinity, the problem becomes intractable unless the Hamiltonian has special properties.

 An example of such a special case is when we consider \emph{non-interacting} electrons which is a fairly accurate description of several metals and insulators. In this case, the Hilbert space we consider is that of fermions living on sites of some lattice whose dynamics is governed by the following \emph{tight-binding} Hamiltonian:
\begin{equation}
\label{eq:tight binding}
H_{tb} = - \sum_{i,j,\alpha,\beta} t^{i j}_{\alpha \beta}~ c^\dagger_{i \alpha} c_{j \beta} + h.c.
\end{equation}
Here, $c^\dagger_{i \alpha}$ and $ c_{i \alpha}$ are the creation and annihilation operators for fermions at site $i =1 \ldots L$ and of flavor $\alpha = 1\ldots F$, $L$ is the number of sites on the lattice and $F$ is the number of flavours per site (eg: orbitals or spin). $H_{tb}$ is a hermitian matrix of size $F^L \times F^L$ that grows exponentially with system size and hence seemingly hard to analyze. However, it is easily solved by simply diagonalizing the \emph{first quantized} Hamiltonian $t^{ij}_{\alpha\beta}$ whose size, $L F \times L F$ grows quadratically with $L$ and is vastly simpler to diagonalize than the original $H_{tb}$. If present, translation invariance simplifies the problem even further and reduces it to diagonalizing a system size independent family of \emph{Bloch Hamiltonians}, $H(\vec{k})$ parametrized by \emph{crystal momenta}, $\vec{k}$ whose energies, called \emph{bands} allows us to determine if the Hamiltonian~(\ref{eq:tight binding}) is describing a metal or insulator depending on whether the ground state of $H_{tb}$ corresponds to partially or fully filled energy bands. 

This simplification is lost if we are studying \emph{strongly coupled} quantum matter where interactions between the electrons cannot be ignored. A very important example is the so-called \emph{Hubbard  model} which describes spinful electrons on a lattice with on-site repulsion  and is believed to be a good candidate to describe phenomena ranging from magnetism to high-temperature superconductivity.
\begin{equation}
H_{Hub} =  -t \sum_{\langle i,j \rangle, \sigma}  c^\dagger_{i,\sigma} c_{j,\sigma} + h.c + U \sum_{i} c^\dagger_{i,\uparrow} c_{i,\uparrow} c^\dagger_{i,\downarrow} c_{i,\downarrow}
\end{equation} 
For large $U >> t$ the low-energy physics is approximated by the Heisenberg quantum antiferromagnet (QAF)
\begin{equation}
\label{eq:QAF}
H_{QAF} = J \sum_{\langle i,j \rangle} \vec{S}_i.\vec{S}_j 
\end{equation}
where, $S^m$ are the spin-half generators of the angular momentum algebra. Obtaining the low-energy properties of (\ref{eq:QAF}) and other such spin models is a problem of great interest and in 1961, Lieb, Schultz and Mattis (LSM) were concerned with the case of the one-dimensional version of the spin-half QAF~(\ref{eq:QAF})
\begin{equation}
\label{eq:AFM}
H = J \sum_i \vec{S}_i.\vec{S}_{i+1}.
\end{equation}
In a very important work~\cite{LSM_961407}, they were able to prove that the spectrum of eq~(\ref{eq:AFM}) was gapless. Soon after, Majumdar and Ghosh~\cite{MajumdarGhoshModel} were able to show that with the following next-nearest-neighbor coupling to the spin half QAF,
\begin{equation}
\label{eq:MG_model}
H = J \sum_i \vec{S}_i.\vec{S}_{i+1} + \frac{J}{2} \sum_i \vec{S}_i.\vec{S}_{i+2},
\end{equation}
the spectrum now becomes gapped but with a two-fold ground state degeneracy due to spontaneous symmetry breaking. Finally, in a vast generalization of LSM's results, Haldane~\cite{Haldane2016ground,Haldane_PhysRevLett.50.1153} argued that any half-odd-integer spin QAF is gapless whereas the integer spin QAF was gapped. All these results and their generalizations~\cite{Hastings_2007_MPS,Oshikawa_LSM_PhysRevLett.84.1535} state that when certain lattice symmetries and half-odd-integer spin rotation symmetries are present, the ground state can never be unique and gapped but can be gapless or degenerate (either due to symmetry breaking or topological ordering). 

In a seemingly unrelated study, it was found that there existed symmetry broken phases with `incompatible' symmetries separated by a Landau forbidden \emph{deconfined quantum critical} (DQC) phase transition~\cite{Senthil_DQC1490}. More recently, these types of phase transitions were also found to occur on the boundaries of certain exotic phases called \emph{symmetry protected topological} (SPT) phases~\cite{VishwanathSenthil_PhysRevX.3.011016}. 

The modern synthesis of all these works is a novel way to understand these exotic phenomena. Instead of focusing the Hamiltonian which governs the \emph{dynamics} of these system, it is now understood that DQC transitions, boundary phenomena of SPT phases as well as LSM theorems are the result of the \emph{kinematics} of the systems- the symmetries are realized in a manner that is \emph{anomalous}~\cite{ElseThorngren_AnomalyLSM2019topological,Cho_AnomalyLSM_PhysRevB.96.195105,Metlitski_Thorngren_DQCAnomalies_PhysRevB.98.085140}. This anomaly obstructs the possibility of a trivial phase with a unique gapped ground state as well as facilitate Landau forbidden phase transitions.

In this paper, we provide an elementary proof of LSM theorems in 1D using only kinematics by exploiting the powerful technology of matrix product state representations. This gives us a clear setting where the notion of anomalous symmetries can be concretely understood. The paper is organized as follows: in sec~(\ref{sec:MPS}), we review the essential features of matrix product states with a particular focus on the implementation of symmetries, in sec~(\ref{sec:SO3_LSM}) we prove the theorem for the specific case of on-site spin-half rotation symmetry and in sec~(\ref{sec:G_LSM}), we prove the general theorem. Finally, in sec~(\ref{sec:DQC}), we put these results in perspective with regard to known models and phase diagrams. Intermediary lemmas as well as additional mathematical details are relegated to the appendices. 

\section{Matrix product state representations of quantum ground states}
\label{sec:MPS}
Let $H$ be a quantum Hamiltonian acting on the Hilbert space of a one-dimensional spin chain of $d$-level spins (qudits). Let $\ket{\psi}$ be its ground state which we can expand in some local basis as
\begin{equation}
    \ket{\psi} = \sum_{m_1, m_2, \ldots m_L}  c_{m_1 m_2 \ldots m_L} \ket{m_1,m_2,\ldots m_L}.
\end{equation}
Determining $\ket{\psi}$ typically requires specifying $d^L$ i.e. an exponentially large number of coefficients $c_{m_1 m_2 \ldots m_L}$. For one-dimensional gapped spin chains however, there exist rigorous results that prove that $\ket{\psi}$ can be efficiently represented as a \emph{matrix product state}  (MPS)~\cite{Hastings_2007_MPS,Garcia_2007_MPS}
\begin{equation}
     c_{m_1 m_2 \ldots m_L} = Tr[A^{m_1}_1A^{m_2}_2 \ldots A^{m_L}_L] 
     \label{eq:MPS}
\end{equation}
$A^{m_i}_i$ are finite matrices whose dimensions does not scale with system size. The MPS can be visualized as fig~(\ref{fig:MPS}.
\begin{figure}[!htbp]
	\centering
	\includegraphics[width=0.35\textwidth]{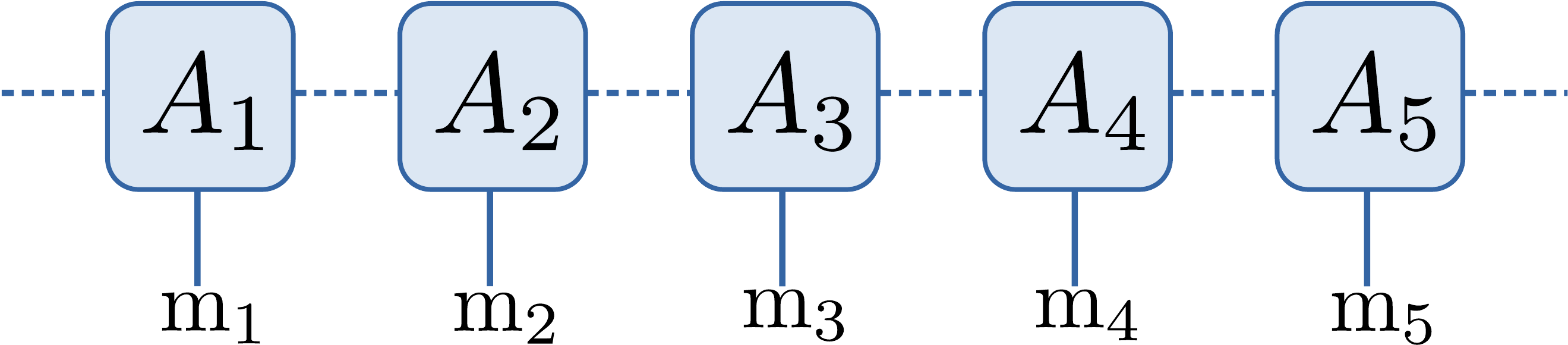}
	\caption{Graphical representation of MPS}
	\label{fig:MPS}
\end{figure}
In this representation, several features one 1D gapped systems like area-law scaling of entanglement entropy and finiteness of correlation length becomes manifest. MPS representations are at the heart of powerful numerical techniques like DMRG~\cite{White_DMRG_PhysRevLett.69.2863} and TEBD~\cite{Vidal_TEBD_PhysRevLett.91.147902} and have also facilitated several analytical breakthroughs including a \emph{complete classification} of gapped phases in 1D~\cite{FidkowskiKitaev_1dClassification_PhysRevB.83.075103,ChenGuWen_1dClassification_PhysRevB.84.235128}. 

Another advantage of MPS representations is that global symmetries can be easily incorporated and implemented. If $\ket{\psi}$ is left invariant under the action of global symmetries, this translates into conditions that the matrices $A^{m_i}_i$ have to satisfy. Let us consider two classes of global symmetries that would be relevant to this paper. 

First, if the state $\ket{\psi}$ is invariant under lattice translations  $\ket{m_i} \mapsto \ket{m_{i+1}}$, it can be shown that the matrices can be made site independent i.e. $A^{m_i}_i \equiv A^{m_i}$ so $\ket{\psi}$ can be written in a form that is manifestly translation invariant~\cite{Garcia_2007_MPS} as shown in fig~(\ref{fig:MPS_translation})
\begin{figure}[!htbp]
	\centering
	\includegraphics[width=0.35\textwidth]{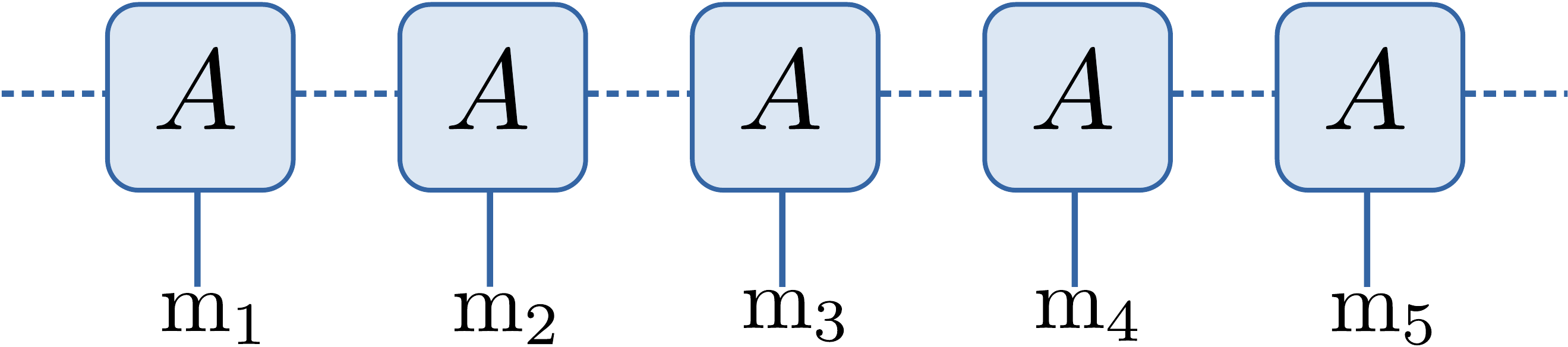}
	\caption{Translation invariant MPS}
	\label{fig:MPS_translation}
\end{figure}
\begin{equation}
    \label{eq:translation_invariant_MPS}
    \ket{\psi} = \sum_{m_1, \ldots m_L} Tr[A^{m_1} A^{m_2} \ldots A^{m_L}]  \ket{m_1,m_2,\ldots m_L}
\end{equation}
Secondly, if the state is invariant under the action of an on-site symmetry i.e. a unitary rotation of each spin by the same operator of the form $U(g) = \bigotimes_{i=1}^L D(g)$, the invariance condition $U(g) \ket{\psi} = \ket{\psi}$ can be imposed on the matrices $A^{m_i}_i$~\cite{FidkowskiKitaev_1dClassification_PhysRevB.83.075103,ChenGuWen_1dClassification_PhysRevB.84.235128} as follows (and visualized as shown in fig~(\ref{fig:MPS_internal}))
\begin{figure}[!htbp]
	\centering
	\includegraphics[width=0.35\textwidth]{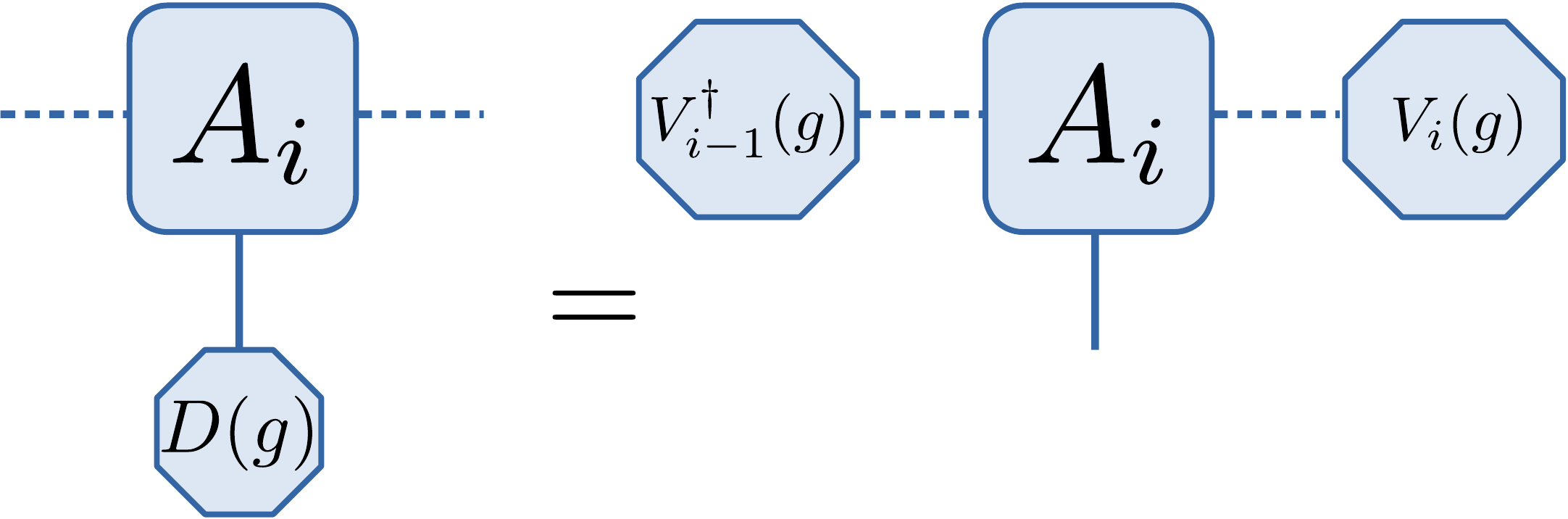}
	\caption{MPS invariant under on-site symmetry}
	\label{fig:MPS_internal}
\end{figure}
\begin{equation}
    \label{eq:MPS_onsite_invariane}
    \sum_{n} D(g)_{m n} A^{n}_{i} = V^\dagger_{i-1}(g) A_i^{m} V_i(g)
\end{equation}
In the presence of both on-site and lattice translation symmetries, when the MPS matrices are site-independent, the condition of eq~(\ref{eq:MPS_onsite_invariane}) takes on a simpler form as shown in fig~(\ref{fig:MPS_translation_internal})
\begin{figure}[!htbp]
	\centering
	\includegraphics[width=0.35\textwidth]{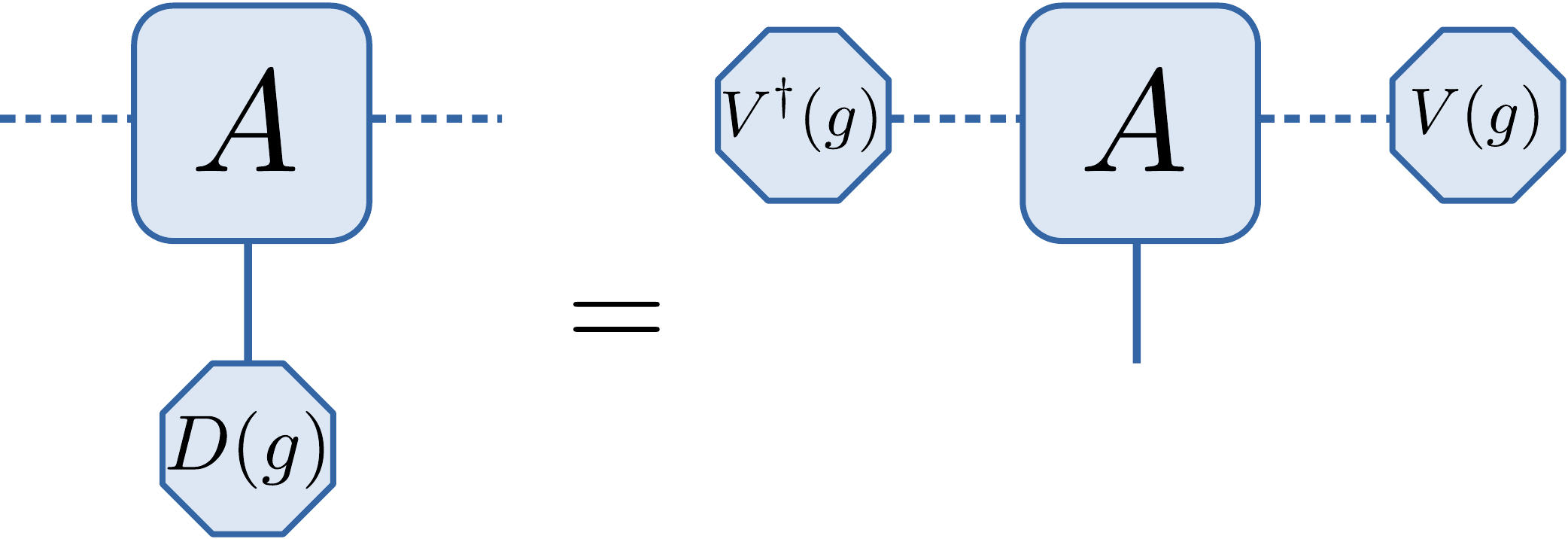}
	\caption{Translation invariant MPS with on-site symmetry}
	\label{fig:MPS_translation_internal}
\end{figure}
\begin{equation}
    \label{eq:MPS_onsite_translation_invariance}
    \sum_{n} D(g)_{m n} A^{n} = V^\dagger(g) A^{m} V(g)
\end{equation}
Finally, if the ground state is unique, the matrices also satisfy the so-called \emph{injectivity} condition. This rather technical condition is defined as follows: for any matrix $M$ acting on the \emph{bond} Hilbert space i.e. the Hilbert space in which the matrices the $A$ live (sometimes also referred to as virtual Hilbert space), there exists a finite integer value $I^*$ such that for $I>I^*$, the map $\Gamma_I$ defined below is \emph{injective}
\begin{equation}
    \Gamma_I(M) = \sum_{m_1, \ldots m_I} Tr[M A^{m_1} \ldots A^{m_I}] \ket{m_1, \ldots m_I}
\end{equation}
Put simply, this condition means that there exists some finite value $I^*$ such that for $I>I^*$, 
\begin{equation}
    \label{eq:injectivity}
	M_1 \neq M_2 \implies \Gamma_I(M_1) \neq \Gamma_I(M_2) 
\end{equation}
i.e. there are no two distinct matrices $M_1 \neq M_2$ such that $\Gamma_I(M_1) = \Gamma_I(M_2)$. Another consequence of injectivity is that the so-called transfer matrix, 
\begin{equation}
T= \sum_m A^m \otimes (A^{m})^*
\end{equation}
has a unique largest eigenvalue, which, if set to 1, ensures normalization of $\ket{\psi}$ in the thermodynamic limit. We will assume throughout this paper that all injective MPS are normalized in this manner. 

In the following sections, we will see that 1D LSM theorems follow from the mutual incompatibility of the conditions imposed by on-site symmetry and lattice translations on $A^m$ matrices with that of injectivity.

\section{The proof for the SO(3) invariant case}
\label{sec:SO3_LSM}
Let us first look at the canonical setting where the LSM theorem was originally proved. This sets the language using the representation theory of $SO(3)$ which is well known from the quantum theory of angular momentum~\cite{sakurai_napolitano_2017} which we will eventually generalize to arbitrary symmetry groups.

Hamiltonians (\ref{eq:AFM}) and (\ref{eq:MG_model}) have lattice translation symmetry as well as spin-rotation symmetry, $SO(3)$ generated by $U(g) = \bigotimes_{i=1}^L D(g)$ where $D(g) = \exp{ (i\hat{n}.\vec{\sigma} \frac{\theta}{2})}$ are the spin-half rotation matrices. In this setting, we can prove the following theorem 
\begin{theorem}
\label{th:SO(3)LSM}
A spin chain with lattice translation invariance and half-odd-integer spin rotation symmetry cannot have a unique gapped ground state.
\end{theorem}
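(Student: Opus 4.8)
The plan is to argue by contradiction, turning the uniqueness and the gap into the algebraic structure of an injective matrix product state and showing it is incompatible with half-odd-integer spin. Suppose such a chain did possess a unique gapped ground state $\ket{\psi}$. By the representation results quoted in sec~(\ref{sec:MPS}), the gap lets me write $\ket{\psi}$ as an MPS, uniqueness makes it injective, and translation invariance lets me take the matrices site-independent, $A^{m_i}_i\equiv A^m$; the on-site $SO(3)$ symmetry then imposes eq~(\ref{eq:MPS_onsite_translation_invariance}), $\sum_n D(g)_{mn}A^n = V^\dagger(g)A^m V(g)$, for a single family of unitary bond operators $V(g)$. Because the physical spin is half-odd-integer, $D$ is not a genuine representation of $SO(3)$ but only a projective one, whose class is the nontrivial element of $H^2(SO(3),U(1))=\ztwo$; equivalently $D$ is a genuine representation of the double cover $SU(2)$, on which the $2\pi$ rotation acts as $-\id$.

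First I would extract the bond-space consequence of this projectivity. Applying eq~(\ref{eq:MPS_onsite_translation_invariance}) successively for two group elements $g$ and $h$ and comparing with the single application for their product, the physical cocycle $\omega$ of $D$ (defined by $D(g)D(h)=\omega(g,h)D(gh)$) reappears on the bond space as a conjugation identity $W\,A^m\,W^\dagger = \omega(g,h)\,A^m$ for all $m$, where $W=V(gh)V^\dagger(g)V^\dagger(h)$ is unitary. Since the class of $\omega$ is nontrivial it cannot be gauged to the constant $1$, so there exist group elements (for instance two $\pi$-rotations about perpendicular axes) with $\omega(g,h)=-1$. Fixing such a pair produces a single unitary $W$ that \emph{anticommutes} with every MPS matrix, $W A^m = -A^m W$.

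The contradiction then follows from injectivity. Propagating the anticommutation through a length-$I$ product yields $W A^{m_1}\cdots A^{m_I} = (-1)^I A^{m_1}\cdots A^{m_I} W$, so by cyclicity of the trace $\Tr[W A^{m_1}\cdots A^{m_I}] = (-1)^I \Tr[W A^{m_1}\cdots A^{m_I}]$, which forces every such trace to vanish whenever $I$ is odd. Hence $\Gamma_I(W)=0=\Gamma_I(0)$ for odd $I$; choosing $I>I^*$ odd, the injectivity property~(\ref{eq:injectivity}) forces $W=0$, contradicting the unitarity of $W$. Therefore no unique gapped $SO(3)$-symmetric translation-invariant ground state can exist.

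I expect the genuine work to sit in the second step: establishing the conjugation identity $W A^m W^\dagger=\omega(g,h)A^m$ rigorously, which relies on the fundamental-theorem/Schur-type statement that for an injective MPS the intertwiner $V(g)$ is determined up to a phase, and on carefully tracking those phases so that the physical cocycle passes to the bond operators without spurious factors. Two ingredients are clearly indispensable and worth isolating as lemmas: translation invariance, which supplies one site-independent $W$ so that the sign $(-1)^I$ is uniform along the chain, and the half-odd-integer hypothesis, which is exactly what makes $\omega$ nontrivial — for integer spin $\omega\equiv1$, $W$ commutes with all $A^m$ and reduces to a scalar, and the argument (as it must) produces no contradiction.
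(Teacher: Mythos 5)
Your proof is correct, but it reaches the contradiction by a genuinely different route than the paper. The paper block-diagonalizes $V(g)$ into bond angular-momentum sectors, invokes the Wigner--Eckart theorem to write the matrix elements of $A^m$ as Clebsch--Gordan coefficients times reduced matrix elements, proves in lemma~(\ref{lemma:SO(3)_Bond_hilbert_space}) that injectivity forces the bond spins to be all integer or all half-odd-integer, and then notes that the fusion $J\otimes j_\beta \rightarrow j_\alpha$ never preserves that parity when $J$ is half-odd-integer, so every Clebsch--Gordan coefficient, hence every $A^m$, hence $\ket{\psi}$ itself, vanishes. You bypass the representation-theoretic decomposition entirely: you push the physical cocycle onto the bond space as a single unitary $W$ anticommuting with all $A^m$, then use cyclicity of the trace and injectivity at an odd $I>I^*$ to force $W=0$, contradicting unitarity. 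This is essentially the computation the paper relegates to its appendix (compare $X=V(\hat n,2\pi)$ in lemma~(\ref{lemma:SO(3)_Bond_hilbert_space}) and $X=V(g)V(h)V^\dagger(gh)$ in lemma~(\ref{lemma:G_Bond_hilbert_space}), where the identity $\Gamma_I(XMX^\dagger)=\omega_J(g,h)^I\,\Gamma_I(M)$ is derived), but you deploy it to finish the theorem directly rather than only to constrain the bond Hilbert space. Your route is shorter, needs no Clebsch--Gordan machinery, and proves theorem~(\ref{th:General_LSM}) verbatim for any $G$ with nontrivial class in $\hgc$; the paper's route costs more but delivers the explicit structure of the bond Hilbert space (a single projective class), which it then reuses in corollary~(\ref{cor:SO(3)LSM}) and which connects to the SPT classification. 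One bookkeeping remark: composing eq~(\ref{eq:MPS_onsite_translation_invariance}) for $g$ and then $h$ gives $\omega(g,h)A^m=W^\dagger A^m W$ with $W=V(g)V(h)V^\dagger(gh)$ (your $W$ has the factors in a slightly different order), and this step is a direct two-line computation requiring no appeal to the fundamental theorem of MPS or Schur's lemma; since the relevant $\omega(g,h)=-1$ is real, the anticommutation you need is unaffected by these details.
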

\begin{proof}
We prove this by contradiction. Let the $J$ be the total angular momentum of the physical spin that takes some half-odd-integer value. The spin has $2J+1$ internal states which are labeled by the azimuthal quantum number  $m=-J,\ldots,+J$. Let us now assume that there does exist a unique gapped ground state. This would mean that this ground state can be expressed as an MPS (\ref{eq:MPS}) invariant under translation and on-site symmetries i.e.
\begin{equation}
    \label{eq:SO(3)_MPS_onsite_translation_invariance}
    \sum_{n = -J}^{+J} D(g)_{mn} A^{n} = V^\dagger(g) A^{m} V(g)
\end{equation}
and satisfying the injectivity condition of eq~(\ref{eq:injectivity}). $D(g)$ is the $(2J+1) \times (2J+1)$ dimensional spin-J irreducible representation (irrep) of  $SU(2)$. We can choose an appropriate basis such that $V(g)$ are also block diagonalized into different angular momentum sectors and examine the matrix elements of $A^m$ in this basis. Eq~(\ref{eq:SO(3)_MPS_onsite_translation_invariance}) tells us that $A^m$ is a \emph{tensor operator} and thus its matrix elements are constrained by the \emph{Wigner-Eckart theorem}~\cite{sakurai_napolitano_2017} as
\begin{multline}
	\label{eq:Wigner-Eckart}
    \innerproduct{j_\alpha,m_\alpha,d_\alpha|A^m}{j_\beta,m_\beta,d_\beta} \\= \innerproduct{J,m;j_\beta,m_\beta}{j_\alpha,m_\alpha} \innerproduct{j_\alpha,d_\alpha||A^m|}{j_\beta,d_\beta}
\end{multline}
where $\innerproduct{J,m;j_\beta,m_\beta}{j_\alpha,m_\alpha}$ are the Clebsch-Gordan (CG) coefficients corresponding to the fusion of angular momenta $J \otimes j_\beta \rightarrow j_\alpha$ and $\innerproduct{j_\alpha,d_\alpha||A^m|}{j_\beta,d_\beta}$ are undetermined just from symmetry but are independent of azimuthal quantum numbers. In general, the bond Hilbert space can contain multiple copies of same irrep which is labeled by $d_\alpha$ and $d_\beta$. 

Injectivity (\ref{eq:injectivity}) constrains the bond angular momenta, $j_\alpha$ and $j_\beta$ to either be only integer or only half-odd-integer valued but not both. For ease of reading, we prove this in appendix~(\ref{app:proofs}) (see lemma~(\ref{lemma:SO(3)_Bond_hilbert_space})). The proof of the theorem now follows easily. Since the physical spin $J$ is half-odd-integer, the rules of addition of angular momenta~\cite{sakurai_napolitano_2017},
\begin{equation}
    J \otimes j_\beta = |J-j_\beta|  \oplus |J-j_\beta|+1 \oplus \ldots  |J+j_\beta|-1 \oplus |J+j_\beta| ,
\end{equation}
tells us that there are no fusion channels $J \otimes j_\beta \rightarrow j_\alpha$ for $j_\alpha$ and $j_\beta$ both being either integer or half-odd integer. 

For example, consider physical spin $J=\frac{1}{2}$
\begin{equation}
    \frac{1}{2} \otimes j_\beta =\left| j_\beta -\frac{1}{2} \right| \oplus \left|j_\beta +\frac{1}{2} \right|
\end{equation}
it is clear that if $j_\beta$ is integer, the RHS above contains half-odd-integers and if $j_\beta$ is half-odd-integer, the RHS contains only integers. This means that all CG coefficients, $\innerproduct{J,m;j_\beta,m_\beta}{j_\alpha,m_\alpha}$ vanish as do the matrices $A^m$ and finally the state $\ket{\psi}$ itself! This means that our assumption of the existence of a unique symmetric ground state that can be described by an injective MPS was wrong. 

On the other hand, if the physical spin $J$ was integer, fusion channels $J \otimes j_\beta \rightarrow j_\alpha$ exist for both $j_\alpha$ and $j_\beta$ being integers or half-odd-integer and there is nothing obstructing the existence of a fully symmetric gapped ground state. 
\end{proof}

We just proved that a spin chain with lattice translation invariance and half-odd-integer spin rotation symmetry cannot have a unique gapped ground state. We now list possible alternatives for the low-energy properties.

\begin{corollary}
\label{cor:SO(3)LSM}
A spin chain with lattice translation invariance and half-odd-integer spin rotation symmetry can be either gapless or evenly degenerate with spontaneously broken translation symmetry.
\end{corollary}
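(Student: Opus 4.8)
The plan is to start from Theorem~\ref{th:SO(3)LSM}, which forbids a unique gapped ground state, and simply enumerate the surviving possibilities. A gapped system is either unique or degenerate in its ground space; having excluded the unique case, the only alternatives are that the system is gapless or that it is gapped with a degenerate ground space. First I would argue that any such gapped degeneracy must originate from spontaneous symmetry breaking: in one dimension there is no intrinsic topological order (this is built into the MPS classification, where every gapped ground state is a superposition of injective MPS related by the symmetry), so degeneracy can only reflect a broken symmetry. The available symmetries are the continuous spin rotation $SO(3)$ and the discrete lattice translation. By the Mermin-Wagner-Coleman theorem a continuous symmetry cannot be spontaneously broken in one dimension, so $SO(3)$ must remain unbroken and the broken symmetry can only be translation.

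Next I would quantify the degeneracy and pin down its parity. If translation invariance is broken down to invariance under translation by $n$ sites, the orbit of any symmetry-broken ground state under the microscopic translation has length $n$, and since $SO(3)$ is unbroken each block is a spin singlet, so the ground space is exactly $n$-fold degenerate. To fix the parity of $n$ I would apply Theorem~\ref{th:SO(3)LSM} a second time after coarse-graining: group the sites into super-sites of $n$ consecutive spins. On this blocked lattice the residual translation by one super-site acts, $SO(3)$ still acts on-site, and---since all remaining degeneracy has been absorbed into the blocking---the blocked ground state is unique and gapped. The effective on-site representation is the $n$-fold tensor product $D(g)^{\otimes n}$, whose $\ztwo$ grading under the center of $SU(2)$ is the product of $n$ half-odd-integer factors; this tensor product is a genuine (integer-spin) representation of $SO(3)$ precisely when $n$ is even. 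Theorem~\ref{th:SO(3)LSM} applied to the blocked chain forbids a half-odd-integer effective spin, forcing $n$ to be even. Hence the degeneracy $n$ is even, which is exactly the claimed ``evenly degenerate with spontaneously broken translation symmetry'' alternative.

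The main obstacle I anticipate is establishing rigorously, within the MPS language, that the gapped degenerate case reduces cleanly to spontaneously broken translation with a \emph{unique} blocked ground state: one must show that the non-injective MPS describing the ground space decomposes into injective blocks that are permuted \emph{cyclically} by translation (so that the stabilizer is exactly the subgroup generated by $T^n$) and that $SO(3)$ acts within each block rather than permuting them. This is where the no-topological-order and no-continuous-breaking inputs do the real work. Once that reduction is in place, the coarse-graining step and the $\ztwo$ fusion-rule bookkeeping are routine, and the corollary follows directly from the second application of the theorem.
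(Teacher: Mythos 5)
Your proposal is correct and follows essentially the same route as the paper: exclude the gapless case, invoke Coleman--Mermin--Wagner to rule out breaking the continuous $SO(3)$ so that only translation can break, decompose the non-injective ground-space MPS into injective blocks cyclically permuted by translation, and then block $n$ sites into a superspin whose $\ztwo$ fusion grading under the center of $SU(2)$ forces $n$ to be even via a second application of the fusion/Wigner--Eckart argument of Theorem~\ref{th:SO(3)LSM}. The obstacle you flag (the cyclic injective-block structure of the non-injective MPS) is exactly the point the paper handles by citing the standard structure theorem for symmetry-broken MPS ground spaces.
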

\begin{proof}
	 It was shown in theorem~(\ref{th:SO(3)LSM}) that lattice translations and spin rotations are not compatible with the existence of an injective MPS. One possibility is that the ground state simply does not have an MPS description meaning the system is gapless as in the case of the QAF of eq~(\ref{eq:AFM}). Another is that the system has a gapped spectrum but the ground state is not unique. This is possible when symmetry is spontaneously broken. Here, an MPS description for the vectors of the ground space is possible but they are generally not injective. Since the Coleman-Mermin-Wagner theorem~\cite{Coleman1973,MerminWagner_PhysRevLett.17.1133} forbids spontaneous breaking of spin rotation symmetry, the only other allowed possibility is translation symmetry breaking as in the case of the MG model of eq~(\ref{eq:MG_model}). 
	
    Let us consider the spontaneous breaking of single-site translation ($\mathbb{Z}$) to two-site translation symmetry (2$\mathbb{Z}$) with $SO(3)$ intact. The ground state degeneracy (GSD) is $|\mathbb{Z}/2\mathbb{Z}| = 2$. Assuming that the length of the spin chain is even (which we do henceforth), it is known that any general vector in this two-dimensional ground space has the following non-injective MPS representation~\cite{ChenGuWen_1dClassification_PhysRevB.84.235128,Kapustin_MPSTFT_PhysRevB.96.075125}:
    \begin{multline}
        \label{eq:2site_SO(3)_noninjective}
        \ket{\psi} = \sum_{m_1\ldots m_L} Tr[\Theta~C^{m_1} D^{m_2} \ldots C^{m_{L-1}} D^{m_{L}}] \\ \ket{m_1, m_2, \ldots m_L}.
    \end{multline}
    Here, $C^m$ and $D^m$ matrices are block-diagonal as follows with the number of blocks being equal to the GSD 
    \begin{equation}
        C^m = \begin{pmatrix}
        A^m & 0 \\
        0 & B^m
        \end{pmatrix},~~~ D = \begin{pmatrix}
        B^m & 0 \\
        0 & A^m
        \end{pmatrix},
    \end{equation}
    $A^m$ and $B^m$ matrices satisfy injectivity and
    \begin{equation}
        \Theta = \begin{pmatrix}
        \alpha\mathbb{1} & 0 \\
        0 & \beta\mathbb{1}
        \end{pmatrix}\text{ with } |\alpha|^2+ |\beta|^2 = 1.
    \end{equation}
    $\Theta$ commutes with $C^{m}$ and $D^m$ and hence its location in the MPS string is irrelevant.
    That the MPS of eq~(\ref{eq:2site_SO(3)_noninjective}) is not injective can be seen by observing that
    $\Gamma_I(\Theta M) = \Gamma_I( M \Theta)$ even when $\Theta M  \neq  M \Theta$ for any $I$~\cite{Tzu-Chieh}. The ground state vector of eq~(\ref{eq:2site_SO(3)_noninjective}) can be rewritten in a more conventional form:
	\begin{figure}[!htbp]
		\centering
		\includegraphics[width=0.35\textwidth]{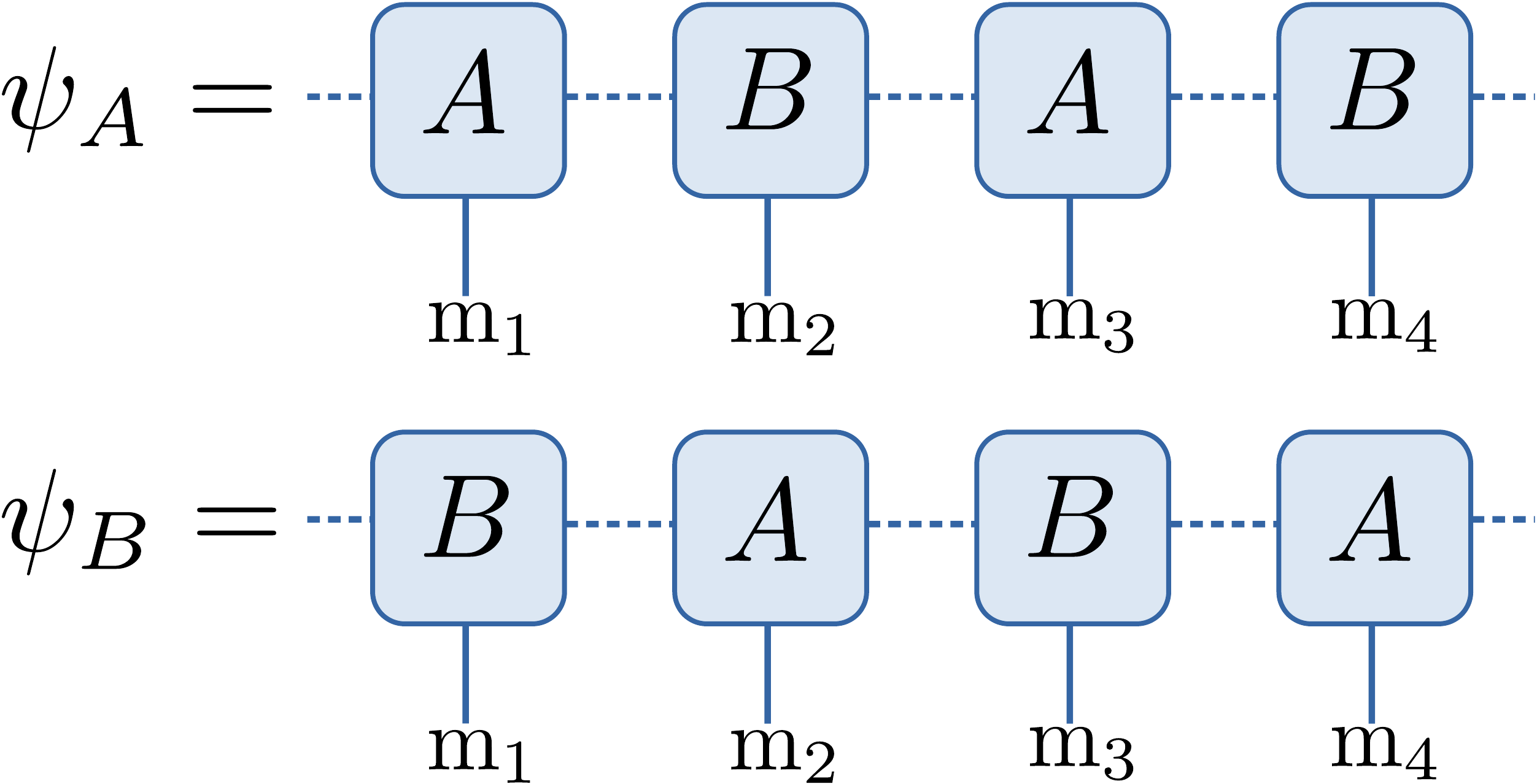}
		\caption{2-site translation invariant MPS basis states}
		\label{fig:MPS_2translation}
	\end{figure}    
    \begin{eqnarray}
    	\label{eq:2site_SO(3)_injective}
    	\ket{\psi} &=& \alpha \ket{\psi_A} + \beta \ket{\psi_B} \\
    	\ket{\psi_A} &=& \sum_{m_1\ldots m_L} Tr[ A^{m_1} B^{m_2} \ldots A^{m_{L-1}} B^{m_{L}}] \ket{m_1 \ldots m_L} \nonumber \\
    	\ket{\psi_B} &=& \sum_{m_1 \ldots m_L} Tr[ B^{m_1} A^{m_2} \ldots B^{m_{L-1}} A^{m_{L}}] \ket{m_1 \ldots m_L}. \nonumber
    \end{eqnarray}
    This tells us that the two-dimensional ground space can be spanned by two basis states, $\ket{\psi_A}$ and $\ket{\psi_B}$, orthonormal in the thermodynamic limit, each of which have an injective MPS representation.  As expected, the broken symmetry generator i.e. single site translation $\ket{m_i} \mapsto \ket{m_{i+1}}$ converts one basis state to the other $\ket{\psi_A} \mapsto \ket{\psi_B}$. We can now look at how the unbroken $SO(3)$ is implemented on the basis states. Invariance of $\ket{\psi_A}$ under the action of the $SO(3)$ transformation, $U(g) = \bigotimes_i D(g)$ implies 
    \begin{eqnarray}
    \sum_{n=-J}^{+J} D(g)_{mn} A^n = V^\dagger(g) A^m W(g)\\
    \sum_{n=-J}^{+J} D(g)_{mn} B^n = W^\dagger(g) B^m V(g).
    \end{eqnarray}
    \begin{figure}[!htbp]
    	\centering
    	\includegraphics[width=0.35\textwidth]{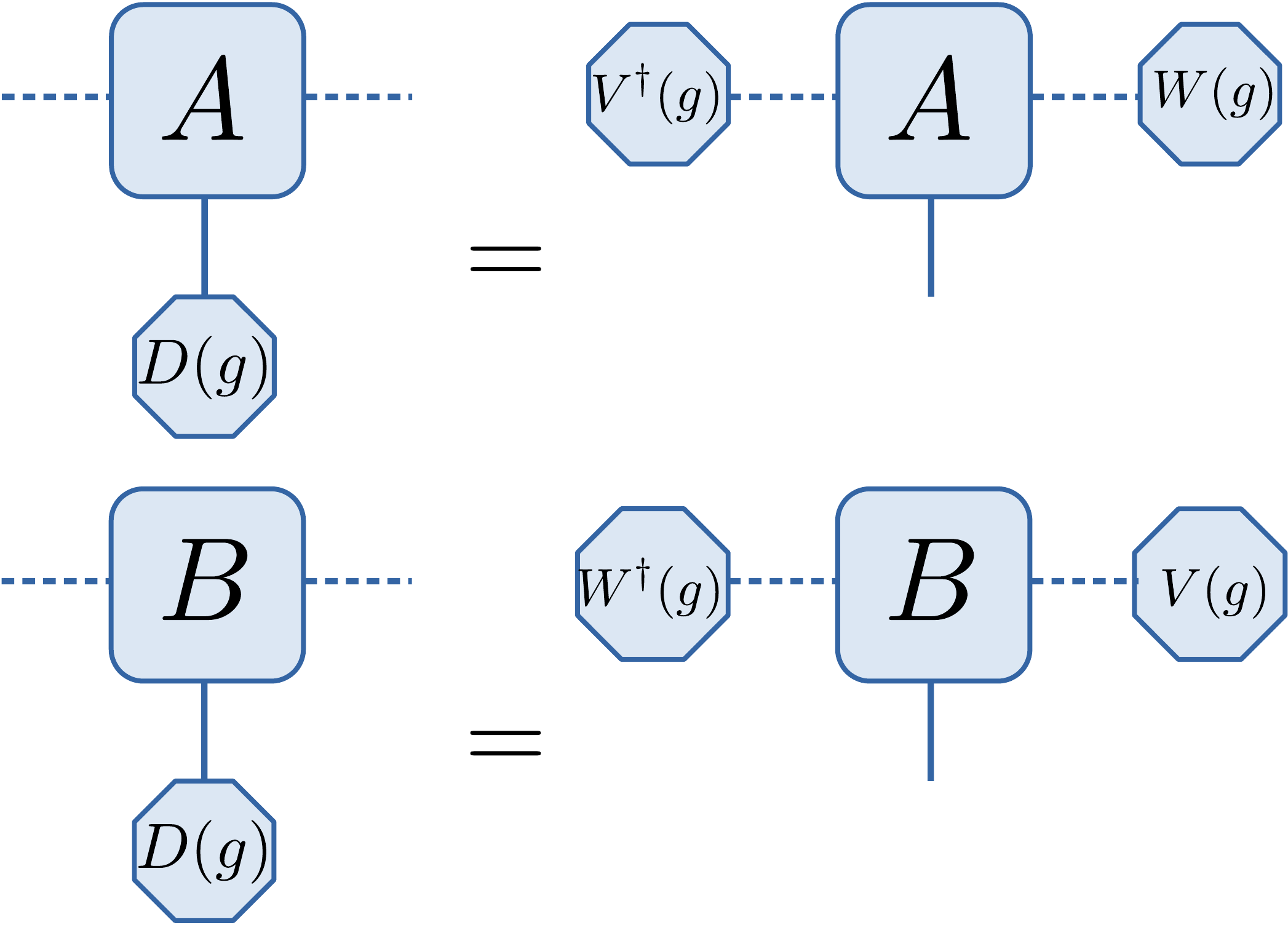}
    	\caption{$SO(3)$ transformation of 2-site translation invariant MPS basis states}
    	\label{fig:MPS_2translation_internal}
    \end{figure}
    If we now redefine the MPS grouping pairs of original spins, 
	we can write $\ket{\psi_A}$ and $\ket{\psi_B}$ in a translationally invariant form
    \begin{eqnarray}
    \ket{\psi_A} &=& \sum_{m_1 \ldots m_L} Tr[ K^{m_1 m_2} \ldots K^{m_{L-1} m_{L}}] \ket{m_1\ldots m_L}, \nonumber \\
    \ket{\psi_B} &=& \sum_{m_1 \ldots m_L} Tr[ L^{m_1 m_2} \ldots L^{m_{L-1} m_{L}}] \ket{m_1\ldots m_L} 
    \end{eqnarray}
    where, as shown in fig~(\ref{fig:MPS2site_redef})
    \begin{equation}
    	K^{mn} \equiv A^m B^n,~~~~ L^{mn} \equiv B^m A^n,
    \end{equation}
    Finally, we perform a local basis change to relabel each pair of original spins into distinct angular momentum sectors i.e.
    \begin{figure}[!htbp]
		\centering
		\includegraphics[width=0.35\textwidth]{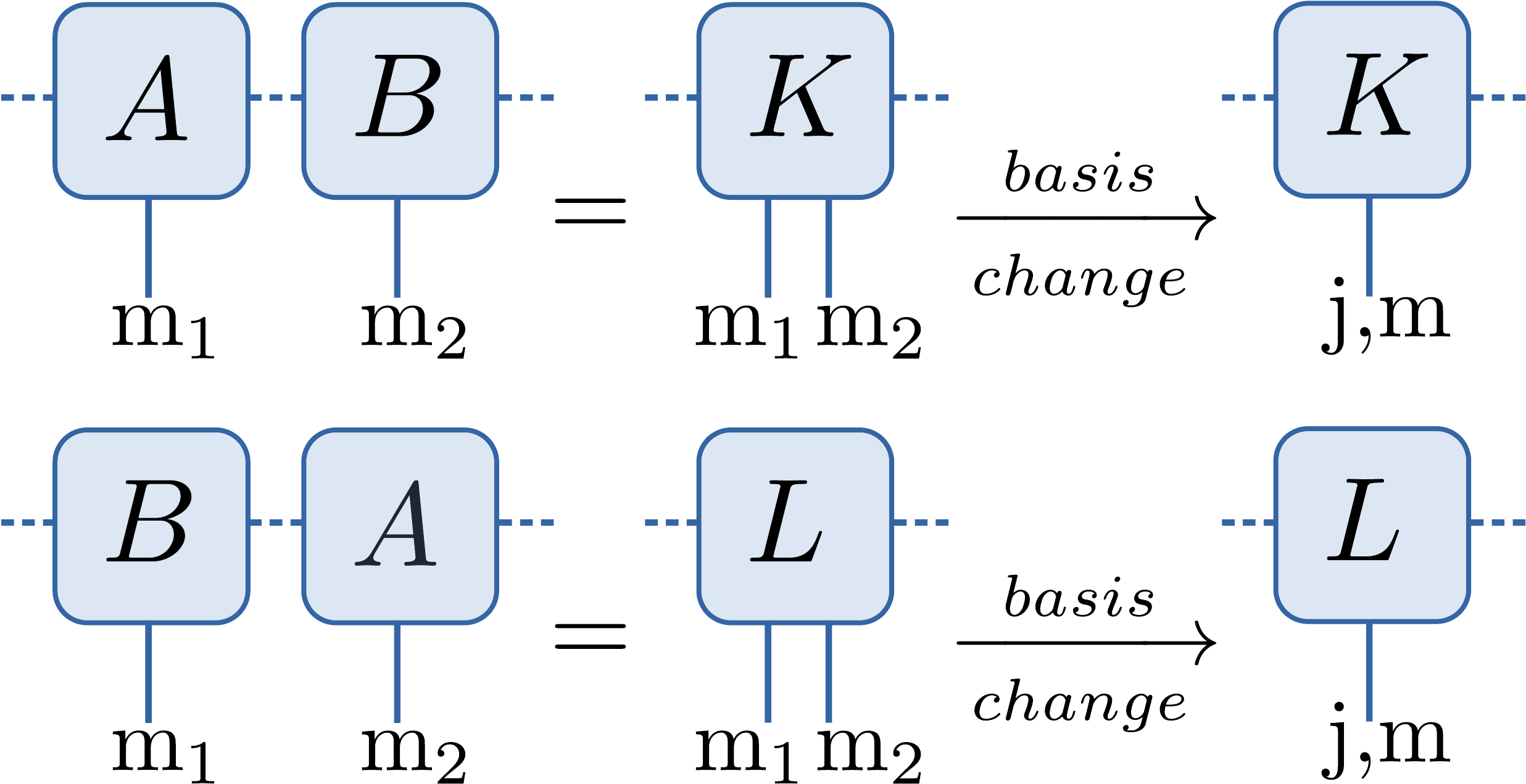}
		\caption{2-site MPS redefinition}
		\label{fig:MPS2site_redef}
	\end{figure}        
    \begin{equation}
    	\ket{m_1,m_2}= \sum_{j=0}^{2J}\sum_{m=-j}^{+j} \innerproduct{j,m}{m_1,m_2} \ket{j,m}
    \end{equation}
    using which we can write $\ket{\psi_A}$ and $\ket{\psi_B}$ as
    \begin{eqnarray}
    	\ket{\psi_A} &=& \sum_{\{j_i,m_i\}} Tr[ K^{j_1 m_1} \ldots K^{j_{\frac{L}{2}} m_{\frac{L}{2}}}] \ket{j_1 m_1\ldots j_{\frac{L}{2}} m_{\frac{L}{2}}}, \nonumber \\
    	\ket{\psi_B} &=& \sum_{\{j_i,m_i\}} Tr[ L^{j_1 m_1} \ldots L^{j_{\frac{L}{2}} m_{\frac{L}{2}}}] \ket{j_1 m_1\ldots j_{\frac{L}{2}} m_{\frac{L}{2}}} \nonumber
    \end{eqnarray}
    where
    \begin{eqnarray}
    K^{j m} &=& \sum_{m_1 m_2} \innerproduct{j,m}{m_1,m_2} K^{m_1 m_2} \\
    L^{j m} &=& \sum_{m_1 m_2} \innerproduct{j,m}{m_1,m_2} L^{m_1 m_2}. 
    \end{eqnarray}
    The length of the chain is now half the original and the local spins are $(2J+1)^2$ dimensional which we have reduced into different $j$ sectors. $j$ is now \emph{only} integer valued and varies from $0,1,\ldots 2J$. The change of basis block-diagonalizes the two-site $SO(3)$ transformation matrices as follows
    \begin{equation}
    D(g) \otimes D(g) \xrightarrow{\text{basis change}} \bigoplus_{j = 0}^{2 J} D^j(g)
    \end{equation}
    We can now impose the invariance of the injective, translation invariant basis states under $SO(3)$ transformations:
   	\begin{figure}[!htbp]
	\centering
	\includegraphics[width=0.35\textwidth]{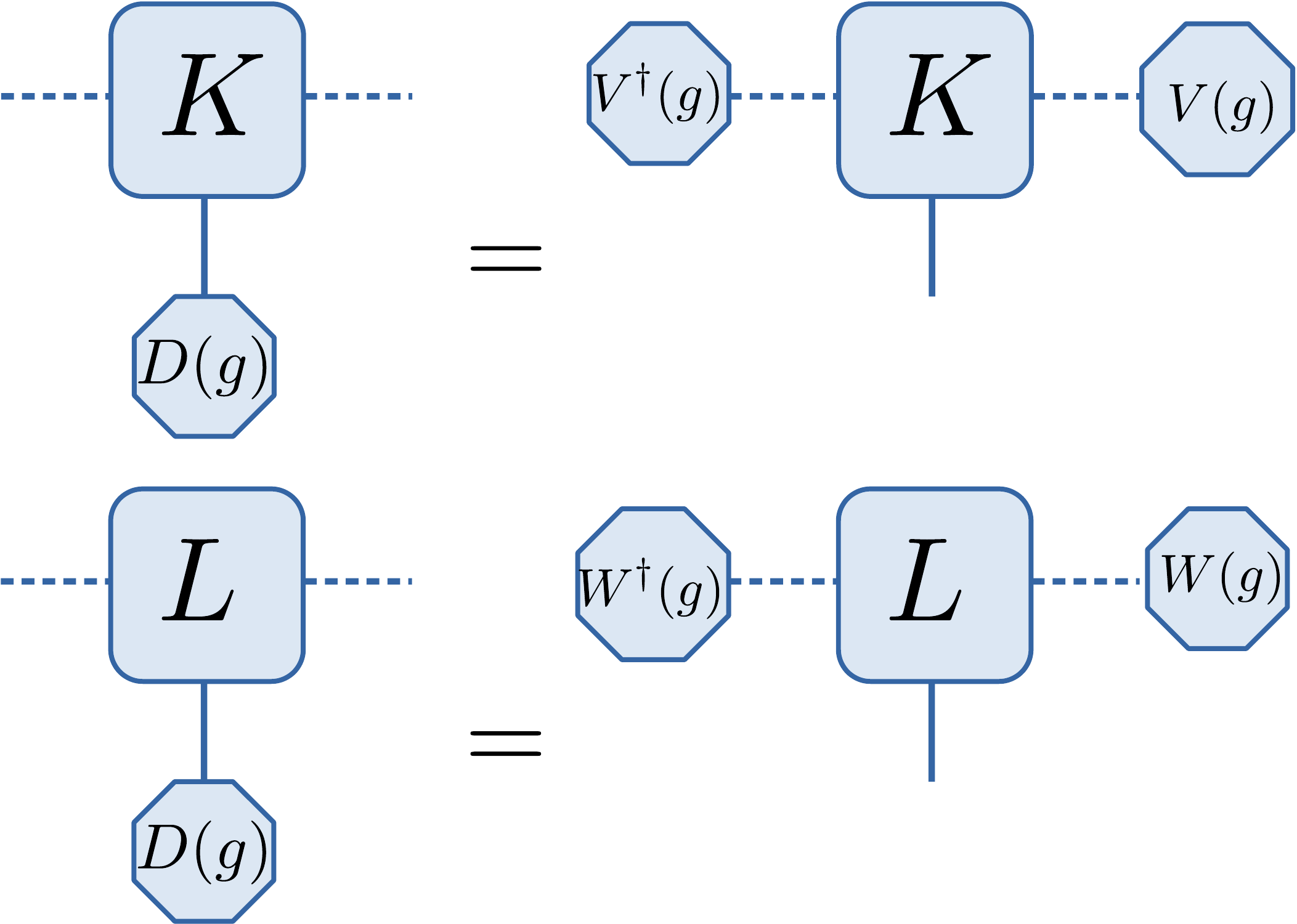}
	\caption{$SO(3)$ transformation on redefined MPS}
	\label{fig:MPS2site_redef_internal}
	\end{figure}    
    \begin{eqnarray}
    \sum_{j = 0}^{2 J}\sum_{n= -j}^{+j} D^j(g)_{m  n} K^{j n} = V^\dagger(g) K^{j m}V(g),\\
    \sum_{j = 0}^{2 J} \sum_{n= -j}^{+j} D^j(g)_{m  n} L^{j n} = W^\dagger(g) L^{j m}W(g).
    \end{eqnarray}
    Once again, invoking the Wigner-Eckart theorem, we can write the matrix elements of $K$ and $L$ matrices as
	\begin{multline}
		\innerproduct{j_\alpha,m_\alpha,d_\alpha|K^{jm}}{j_\beta,m_\beta,d_\beta} \\= \innerproduct{j,m;j_\beta,m_\beta}{j_\alpha,m_\alpha} \innerproduct{j_\alpha,d_\alpha||K^{jm}|}{j_\beta,d_\beta}
	\end{multline}
	\begin{multline}
		\innerproduct{j_\alpha,m_\alpha,d_\alpha|L^{jm}}{j_\beta,m_\beta,d_\beta} \\= \innerproduct{j,m;j_\beta,m_\beta}{j_\alpha,m_\alpha} \innerproduct{j_\alpha,d_\alpha||L^{jm}|}{j_\beta,d_\beta}
	\end{multline}
	Now, since $j$ is integer, the CG coefficients $\innerproduct{j,m;j_\beta,m_\beta}{j_\alpha,m_\alpha}$  exist and there is no problem. Thus, for half-odd-integer spin chains, a two-fold-degenerate gapped ground state is possible retaining $SO(3)$ invariance but breaking translation invariance to two-site translation invariance. 
	
	This result can easily be generalized to prove that if single site translation invariance $\bZ$ is broken to even-site translation invariance $2n \bZ$, a $2n$ fold $SO(3)$ invariant ground space is possible following the same arguments as above because fusing an even number of half-odd-integer spins can be CG decomposed to give only integer spins. However, if translation invariance  $\bZ$ is broken to odd-site translation invariance $(2n+1) \bZ$, a $2n+1$-fold degenerate ground state with a gap is not possible because fusing an odd number of half-odd-integer spin, upon CG decomposition results in only half-odd-integer spins and the arguments of theorem~(\ref{th:SO(3)LSM}) would again lead to the vanishing of the MPS. 
\end{proof}

\section{Proof of the general case}
\label{sec:G_LSM}
We now state and prove a generalization of theorem~(\ref{th:SO(3)LSM}) where the group $SO(3)$ is generalized to any group $G$, integer representations are generalized to so-called \emph{linear} representations of $G$ and half-integer representations are generalized to so-called \emph{projective} representations of $G$. We briefly review projective and linear representations now and direct the reader to appendix~(\ref{app:projective representations}) for more details. 

A linear representation of $G$ is an assignment of an invertible complex matrix, $M(g)$ for each element $g \in G$ such that they are compatible with group multiplication as follows
\begin{equation}
    \label{eq:linear_representation}
    M(g)M(h)= M(gh) 
\end{equation}
A projective representation is an assignment of an invertible complex matrix, $V(g)$ for each element $g \in G$ such that group multiplication holds only \emph{upto a complex phase}
\begin{equation}
    \label{eq:projective_representation}
    V(g) V(h) = \omega(g,h) V(gh)
\end{equation}
Projective representations are familiar in point-particle quantum mechanics because, given the Hilbert space of the system, physical states correspond to the rays in the Hilbert space and thus, even if the vectors transform as a projective representation, all observables transform as linear representations. Let us state a few important facts about projective representations.
\begin{enumerate}
    \item In general, the complex phases $\omega(g,h)$ cannot be removed by rephasing $V(g) \mapsto e^{i \theta_g} V(g)$. 
    \item Different \emph{classes} of projective representations can be assigned elements of an Abelian group, $\hgc$. Linear representations of $G$ correspond to the identity element, $e \in \hgc$.  
    \item The group multiplication law of $\hgc$ is reflected in the \emph{fusion rules} of the projective representations of $G$.
    \item The linear and projective representations of $G$ can be lifted to the linear representation of a \emph{covering group}, $\tilde{G}$.  
 \end{enumerate}
As an example, $G =SO(3)$ has one class of linear representations (integer spins) and one class of projective representations (half-odd-integer spins) both of which can be lifted to linear representations of $\tilde{G} = SU(2)$.  The group structure of $\hgcinput{SO(3)} = \bZ_{2}$ and is reflected in the fact that fusion of integer and half-odd-integer irreps of $SU(2)$ has a $\bZ_{2}$ structure. Let us see this in detail. Let us represent the class of half-odd-integers irreps as $\left[\frac{1}{2}\right]$ and the class of integer irreps as $[1]$. The claim is that each of these classes can be assigned one element of $\pm 1 \in \ztwo $ as follows
\begin{equation}
[1]\rightarrow +1,~~~~\left[\frac{1}{2} \right] \rightarrow -1
\end{equation}
and the $\ztwo$ group structure is reflected in the fusion rules as follows
\begin{equation}
\begin{Bmatrix}
[1] \otimes [1] \in [1] \\[0.2em]
[1] \otimes \left[\frac{1}{2} \right] \in \left[\frac{1}{2} \right] \\[0.2em]
\left[\frac{1}{2} \right] \otimes [1] \in \left[\frac{1}{2} \right] \\[0.2em]
\left[\frac{1}{2} \right] \otimes \left[\frac{1}{2} \right] \in [1]
\end{Bmatrix} \rightarrow
\begin{Bmatrix}
+1 \times +1 = +1 \\[0.2em]
+1 \times -1 = -1 \\[0.2em]
-1 \times +1 = -1 \\[0.2em]
-1 \times -1 = +1 
\end{Bmatrix}
\end{equation}
We now state and prove the general theorem. We retain notations used in the proof of theorem~(\ref{th:SO(3)LSM}) for convenience. 

\begin{theorem}
\label{th:General_LSM}
A spin chain with lattice translation invariance and a projectively realized on-site $G$ symmetry cannot have a unique gapped ground state.
\end{theorem}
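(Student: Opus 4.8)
The plan is to mirror the proof of theorem~(\ref{th:SO(3)LSM}) step for step, replacing the representation theory of $SO(3)$ by that of a general group $G$, with the integer/half-integer dichotomy replaced by the projective classes labelled by $\hgc$ (using additive notation, with identity $e$). I argue by contradiction: suppose a unique, gapped, $G$-symmetric ground state exists. Then it admits an injective, translation-invariant MPS obeying the invariance condition
\begin{equation}
\sum_n D(g)_{mn} A^n = V^\dagger(g) A^m V(g),
\end{equation}
where now $D(g)$ is a projective irrep of $G$ of some \emph{nontrivial} class $[\omega]\in\hgc$ (this is precisely what a ``projectively realized on-site $G$ symmetry'' means, cf. eq~(\ref{eq:projective_representation})) and $V(g)$ is the projective representation carried by the bond Hilbert space.

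First I would recast this condition as an intertwining relation, stating that the tensor $A$ is a $G$-equivariant map built from $D$ and $V$. Lifting $D$ and $V$ to honest linear representations of the covering group $\tilde G$ (see appendix~(\ref{app:projective representations})) lets me decompose the bond space into projective irreps $V=\bigoplus_\alpha V_\alpha$ of definite classes $[\mu_\alpha]\in\hgc$ and apply Schur's lemma. This yields a generalized Wigner--Eckart factorization,
\begin{equation}
\innerproduct{\alpha,m_\alpha,d_\alpha|A^m}{\beta,m_\beta,d_\beta}=\innerproduct{D,m;\beta,m_\beta}{\alpha,m_\alpha}\,\innerproduct{\alpha,d_\alpha||A|}{\beta,d_\beta},
\end{equation}
together with the crucial \emph{selection rule} that the generalized Clebsch--Gordan coefficients---and hence the $(\alpha,\beta)$ block of every $A^m$---vanish unless $[\mu_\alpha]=[\omega]+[\mu_\beta]$, since classes add under the tensor product that governs fusion (the $\hgc$ group law, as in the third fact about projective representations).

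Next I would invoke the generalization of lemma~(\ref{lemma:SO(3)_Bond_hilbert_space}): injectivity of the MPS (eq~(\ref{eq:injectivity})) forces every bond irrep to belong to a single class $[\mu_V]\in\hgc$. Granting this, nonvanishing of the $A^m$ requires $[\mu_V]=[\omega]+[\mu_V]$, i.e. $[\omega]=e$, contradicting the hypothesis that the symmetry is realized in a nontrivial projective class. Hence all $A^m$ vanish, $\ket{\psi}=0$, and no such injective MPS---and therefore no unique gapped symmetric ground state---can exist. Specializing to $G=SO(3)$ and $\hgc=\ztwo$, with $[\omega]$ the half-integer class, reproduces theorem~(\ref{th:SO(3)LSM}) verbatim.

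The hard part will be the injectivity lemma. The generalized Wigner--Eckart step is essentially formal once everything is lifted to $\tilde G$, but showing that injectivity pins the bond representation to one class is the genuinely MPS-specific ingredient and is where the anomaly manifests: because each $A^m$ rigidly shifts the $\hgc$-grading of the bond space by $[\omega]$, a bond space spread over several classes would make the transfer matrix $T=\sum_m A^m\otimes (A^m)^*$ block-reducible and spoil the uniqueness of its largest eigenvalue. Carefully converting this off-diagonality observation into a proof---ruling out the possibility that distinct classes conspire through long MPS words---is the step I would develop in full in appendix~(\ref{app:proofs}), generalizing the $SO(3)$ argument of lemma~(\ref{lemma:SO(3)_Bond_hilbert_space}).
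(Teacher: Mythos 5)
Your proposal is correct and follows essentially the same route as the paper: contradiction via an injective, symmetric, translation-invariant MPS, the generalized Wigner--Eckart factorization with the selection rule that fusion of projective classes must respect the $\hgc$ group law, and the key lemma that injectivity pins all bond irreps to a single class, forcing $\omega_J=e$ and hence $A^m=0$. The only divergence is your sketched strategy for that lemma (transfer-matrix reducibility), whereas the paper's lemma~(\ref{lemma:G_Bond_hilbert_space}) instead constructs the phase operator $X=V(g)V(h)V^\dagger(gh)$ and exhibits two distinct matrices $M\neq N$ with $\Gamma_I(M)=\Gamma_I(N)$; both lead to the same conclusion.
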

\begin{proof}
We prove this by contradiction following the steps used in proving theorem~(\ref{th:SO(3)LSM}). Let us assume that there does exist a unique gapped ground state with an MPS representation of the form (\ref{eq:MPS_onsite_translation_invariance}) invariant under on-site and translation symmetries:
\begin{equation}
    \label{eq:G_MPS_onsite_translation_invariance}
    \sum_{n=1}^{|J|} D(g)_{mn} A^{n} = V^\dagger(g) A^{m} V(g)
\end{equation}
and satisfying the injectivity condition of eq~(\ref{eq:injectivity}).
We assume that the physical spin transforms as an irrep, which we label $J$ that belongs to some non-trivial class of projective representations i.e.
\begin{equation}
\label{eq:projective_assumption}
[J] = \omega_J \in \hgc,~~~\omega_J \neq e.
\end{equation}
$m$ labels its internal states $m=1,2,\ldots,|J|$ and $D(g)$ is the $|J| \times |J|$ matrix representation of the group elements~\cite{irrep_note}. We can choose an appropriate basis such that $V(g)$ are block diagonalized into different irrep sectors to examine the matrix elements of $A^m$. Just like in the case of $SO(3)$, the matrix elements are constrained by an equivalent of the Wigner-Eckart theorem~\cite{AP_TzuChieh_PhysRevA.92.022310,SinghPfeiferVidalPhysRevA.82.050301}:
\begin{multline}
    \innerproduct{j_\alpha,m_\alpha,d_\alpha|A^m}{j_\beta,m_\beta,d_\beta} \\= \innerproduct{J,m;j_\beta,m_\beta}{j_\alpha,m_\alpha} \innerproduct{j_\alpha,d_\alpha||A^m|}{j_\beta,d_\beta}.
\end{multline}
$\innerproduct{J,m;j_\beta,m_\beta}{j_\alpha,m_\alpha}$ are the Clebsch-Gordan coefficients corresponding to the fusion of irreps $J \otimes j_\beta \rightarrow j_\alpha$ and $\innerproduct{j_\alpha,d_\alpha||A^m|}{j_\beta,d_\beta}$ are undetermined from symmetry but are independent of $m$. The bond Hilbert space can contain multiple copies of same irrep which is labeled by $d_\alpha$ and $d_\beta$. 

Injectivity constrains the bond irreps $j_\alpha, j_\beta$ to belong to a \emph{definite} class of projective representations i.e. be labeled by  a specific element of $\omega \in \hgc$ such that $[j_\alpha] = [j_\beta] = \omega$. This is proven in  lemma~(\ref{lemma:G_Bond_hilbert_space}) of appendix(\ref{app:proofs}). For the CG coefficients $\innerproduct{j_\alpha,d_\alpha||A^m|}{j_\beta,d_\beta}$ to exist, the corresponding fusion $J \otimes j_\beta \rightarrow j_\alpha$ should be consistent with the multiplication of the group elements of $\hgc$ that labels the irreps i.e.
\begin{equation}
    \label{eq:CG rule for projective physical spin}
    [J] \otimes [j_\beta] \in [j_\alpha] \implies \omega_J \times \omega = \omega 
\end{equation}
 Clearly, eq~(\ref{eq:CG rule for projective physical spin}) is only consistent with $\omega_J = e$ i.e. the physical spin is a linear representation of $G$. This contradicts our original assumption in eq~(\ref{eq:projective_assumption}) and thus the CG coefficients vanish along with the matrices $A^m$ and the state $\ket{\psi}$ itself. This means that our assumption of the existence of a unique, symmetric, gapped ground state itself was wrong.
\end{proof} 
Now that we have established that the setting of theorem~(\ref{th:General_LSM}) forbids a unique gapped ground state, we again enumerate the possibilities for the low energy properties of the spin chain. 

\begin{corollary}
	A spin chain with lattice translation invariance and a projectively realized on-site G symmetry can be either gapless or have a degenerate ground space with spontaneously broken symmetries.
\end{corollary}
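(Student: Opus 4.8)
The plan is to mirror the structure of the proof of Corollary~(\ref{cor:SO(3)LSM}), replacing the $SO(3)$-specific ingredients by their general-group analogues supplied by the facts about projective representations listed above. First I would invoke Theorem~(\ref{th:General_LSM}): lattice translation invariance, the projectively realized on-site $G$ symmetry, and injectivity cannot hold simultaneously, so a unique gapped ground state --- which by assumption would admit a translation- and $G$-invariant injective MPS --- is ruled out. The two ways to evade the obstruction are therefore (i) the ground state admits no finite-bond-dimension MPS at all, which signals a gapless spectrum, or (ii) the spectrum is gapped but the ground space is degenerate, in which case the injective building blocks reassemble into a non-injective, block-diagonal MPS of the type in eq~(\ref{eq:2site_SO(3)_noninjective}), with the number of blocks equal to the ground-state degeneracy. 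Degeneracy of a gapped ground space with a local, symmetric Hamiltonian is the hallmark of spontaneous symmetry breaking, so case (ii) is precisely the ``degenerate with spontaneously broken symmetries'' alternative.

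Next I would show that case (ii) is actually realizable and identify which symmetries get broken. The two candidate order parameters are the internal $G$ symmetry and the lattice translation $\bZ$. Here I would emphasize the one genuinely new feature relative to $SO(3)$: if (a subgroup of) $G$ is discrete, the Coleman-Mermin-Wagner theorem no longer forbids its spontaneous breaking, so breaking the on-site symmetry becomes an allowed route to degeneracy in addition to translation breaking. For the translation-breaking route, I would break single-site translation $\bZ$ down to $n$-site translation $n\bZ$ and group $n$ consecutive physical spins into a single enlarged site. By fact (3) above --- that fusion of projective representations multiplies their classes in $\hgc$ --- the enlarged physical spin sits in the class $\omega_J^n$, and the obstruction of Theorem~(\ref{th:General_LSM}) is evaded precisely when this fused class is trivial,
\begin{equation}
    \omega_J^n = e \in \hgc,
\end{equation}
i.e. when $n$ is a multiple of the order $N$ of $\omega_J$ in $\hgc$. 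For such $n$ the grouped $G$-action is a genuine linear representation, the Wigner-Eckart argument no longer forces the matrices to vanish, and one obtains an $n$-fold degenerate, $G$-symmetric gapped ground space built from injective blocks exactly as in eqs~(\ref{eq:2site_SO(3)_injective}).

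Finally I would assemble these observations. The minimal translation-breaking pattern compatible with a gapped $G$-symmetric ground state is $\bZ \to N\bZ$, giving $N$-fold degeneracy, and more generally $\bZ \to kN\bZ$ gives $kN$-fold degeneracy; breaking to $m\bZ$ with $N \nmid m$ remains obstructed since $\omega_J^m \neq e$ and the arguments of Theorem~(\ref{th:General_LSM}) again annihilate the MPS. Together with the gapless alternative of case (i) and the possibility of spontaneously breaking a discrete internal $G$ when present, this exhausts the low-energy options and proves the corollary, with the $SO(3)$ result recovered as the special case $N=2$. The main obstacle I anticipate is bookkeeping rather than conceptual: carefully justifying that the block-diagonal non-injective MPS decomposes into orthonormal, injective, symmetry-related basis states in the thermodynamic limit for a general (possibly non-Abelian, possibly continuous) $G$, and tracking how the broken generators permute these blocks --- the cohomological input $\omega_J^n = e$ itself follows cleanly from the fusion-rule fact and needs no further work.
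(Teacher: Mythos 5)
Your proposal is correct and follows essentially the same route as the paper: invoke Theorem~(\ref{th:General_LSM}) to rule out an injective MPS, split the alternatives into ``no MPS description, hence gapless'' versus ``gapped but degenerate, hence spontaneous symmetry breaking,'' note that Coleman--Mermin--Wagner no longer blocks breaking a discrete on-site symmetry, and then verify realizability of the translation-breaking route by grouping $n$ sites so that the fused class is $\omega_J^n$, which is trivial exactly when $n$ is a multiple of the order of $\omega_J$ in $\hgc$ --- this matches the paper's analysis (the paper calls the order $F$ where you call it $N$). The one substantive piece of the paper's proof you only gesture at is the on-site symmetry-breaking route: the paper works out that when $G$ breaks to $H\subset G$ with translations intact, the ground space is $|G/H|$-fold degenerate and a gapped spectrum is allowed precisely when the restriction of the physical irrep to $H$ becomes linear, i.e.\ when the pullback $i^*:\hgc\rightarrow\hgcinput{H}$ sends $\omega_J$ to the identity (such an $H$ always exists, the trivial group being a trivial example). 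Since the corollary as stated only asserts the gapless-or-degenerate dichotomy, your argument suffices for the statement itself, but to match the paper's enumeration of \emph{which} symmetry-breaking patterns are consistent you would need to add this cohomological restriction criterion for the unbroken subgroup, alongside the $f=nF$ condition you already derived for translations.
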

\begin{proof}
	 Theorem~(\ref{th:General_LSM}) says that lattice translations and  projective on-site symmetries are not compatible with the existence of an injective MPS. One possibility is that the ground state simply does not have an MPS description and the system is gapless. Another is that the system has a gapped spectrum but the ground state is not unique and symmetry is spontaneously broken. This admits an MPS description for the vectors of the ground space but they are generally not injective. For a general on-site $G$, unlike the case of $SO(3)$, in addition to the possibility that lattice translations are spontaneously broken, a finite on-site symmetry could also be spontaneously broken~\cite{SSB_note,AP_MBL_PhysRevB.96.165136}. 
	 
    Let us first consider the conditions under which a gapped ground state is possible when lattice translation invariance is spontaneously broken. In the $SO(3)$ case, it was argued that if single-site lattice translation invariance is broken down to two-site lattice translation invariance, we obtain a two-dimensional ground space with a spectral gap. A choice of basis states was obtained such that, by defining \emph{superspins} consisting of two originally half-odd-integer spins, they could be made translation invariant with on-site integer representation and $SO(3)$ invariance. We now generalize this to general groups. 
    
    Like we did in theorem~(\ref{th:General_LSM}), let us assume that the physical spin transforms as a non-trivial projective irrep $J$ that corresponds to a class $[J]$ that can be assigned a non-trivial group element $\omega_J \in \hgc$, $\omega_J \neq e$.  If single-site translation invariance $\bZ$ is spontaneously broken down to f-site translation invariance $f\bZ$, the GSD is $|\bZ/f \bZ| = f$. A vector in the f-dimensional ground space can be represented as~\cite{ChenGuWen_1dClassification_PhysRevB.84.235128,Kapustin_MPSTFT_PhysRevB.96.075125}
    \begin{multline}
        \label{eq:fsite_G_noninjective}
        \ket{\psi} = \sum_{m_1,m_2,\ldots,m_L} Tr[\Theta C_1^{m_1} C_2^{m_2} \ldots C_f^{m_f} C_1^{m_{f+1}} \ldots] \\ \ket{m_1, m_2, \ldots m_L}.
    \end{multline}
    Here, $C_k^m$ matrices ($k = 1 \ldots f$) are block-diagonal with the number of blocks being equal to the GSD i.e. $f$ as follows
    \begin{eqnarray}
        C_1^m &=& \begin{pmatrix}
        A_1^m & 0 & \ldots &0 \\
        0 & A_2^m & \ldots &0 \\
        \vdots & \vdots & \ddots  &0 \\
        0 &   \ldots & A_{f-1}^m & 0 \\        
        0 &  \ldots & 0& A_f^m
        \end{pmatrix}, \nonumber\\
        C_2^m &=& \begin{pmatrix}
        A_2^m & 0 & \ldots &0 \\
        0 & A_3^m & \ldots &0 \\
        \vdots & \vdots & \ddots  &0 \\
        0 &   \ldots & A_f^m & 0 \\
		0 &   \ldots & 0& A_1^m        
        \end{pmatrix}, \nonumber \\
         C_{k|f}^m &=& \begin{pmatrix}
        A_{k|f}^m & 0 & \ldots &0 \\
        0 & A_{k+1|f}^m & \ldots &0 \\
        \vdots & \vdots &\ddots  &0 \\
    0 &   \ldots & A_{k-2|f}^m & 0 \\        
        0 &   \ldots & 0& A_{k-1|f}^m        
        \end{pmatrix} 
    \end{eqnarray}
    where, $A_k^m$ are injective, $k|f$ is shorthand for $k$ mod $f$ and
    \begin{equation}
        \Theta = \begin{pmatrix}
        \alpha_1  \mathbb{1} & 0 & \ldots &0 \\
        0 & \alpha_2  \mathbb{1} & \ldots &0 \\
        \vdots & \vdots &\ddots  &0 \\
        0 &   \ldots & 0& \alpha_k  \mathbb{1}        
        \end{pmatrix} \text{ with } \sum_{k=1}^f|\alpha_k|^2 = 1
    \end{equation}
    $\Theta$ commutes with the matrices $C_k^{m_i}$ and hence its location in the MPS string is irrelevant. That the MPS of eq~(\ref{eq:fsite_G_noninjective}) is not injective can be seen by observing that
    $\Gamma_I(\Theta M) = \Gamma_I( M \Theta)$ even when $\Theta M  \neq  M \Theta$ for any $I$~\cite{Tzu-Chieh}. The ground state vector of eq~(\ref{eq:fsite_G_noninjective}) can be rewritten as
    \begin{figure}[!htbp]
	\centering
	\includegraphics[width=0.35\textwidth]{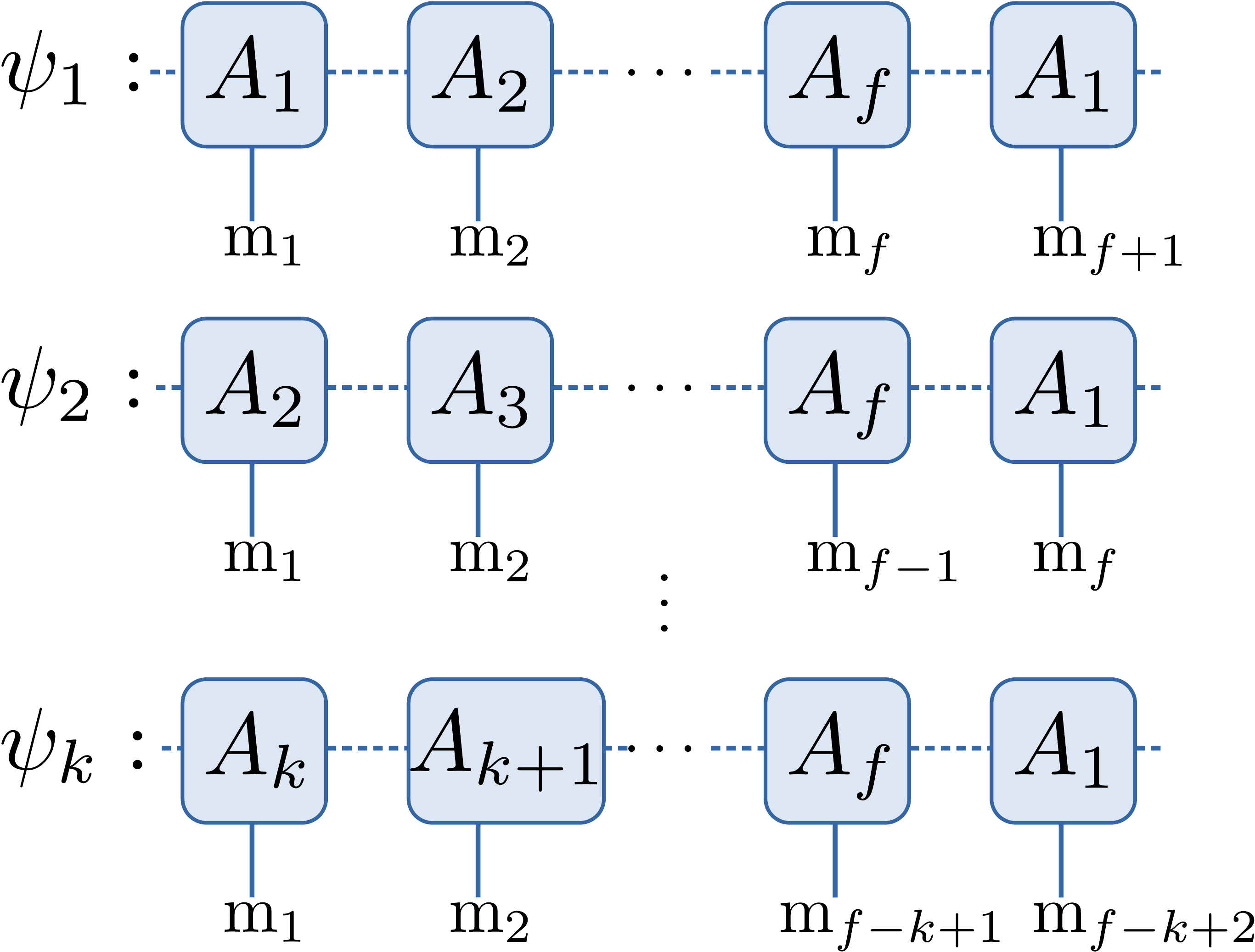}
	\caption{f-site translation invariant MPS basis state}
	\label{fig:MPS_fftranslation}
	\end{figure}    
    \begin{eqnarray}
    \label{eq:fsite_G_injective}
    \ket{\psi} &=& \sum_{k=1}^f \alpha_k \ket{\psi_k}  \\
    \ket{\psi_k} &=& \sum_{m_1\ldots m_L} Tr[ A_{k|f}^{m_1} A_{k+1|f}^{m_2} \ldots] \ket{m_1 \ldots m_L} 
    \end{eqnarray}
    This tells us that the f-dimensional ground space can be spanned by $f$ basis states, $\ket{\psi_k}$, orthonormal in the thermodynamic limit, which have an injective MPS representation.  As expected, the broken symmetry generator i.e. single site translation $\ket{m_i} \mapsto \ket{ m_{i+1}}$ converts one basis state to the other $\ket{\psi_k} \mapsto \ket{\psi_{k-1|f}}$. We can now look at how the unbroken internal $G$ symmetry is implemented on the basis states. Invariance of $\ket{\psi_k}$ under the action of the transformation, $U(g) = \bigotimes_i D(g)$ implies 
    \begin{figure}[!htbp]
    	\centering
    	\includegraphics[width=0.35\textwidth]{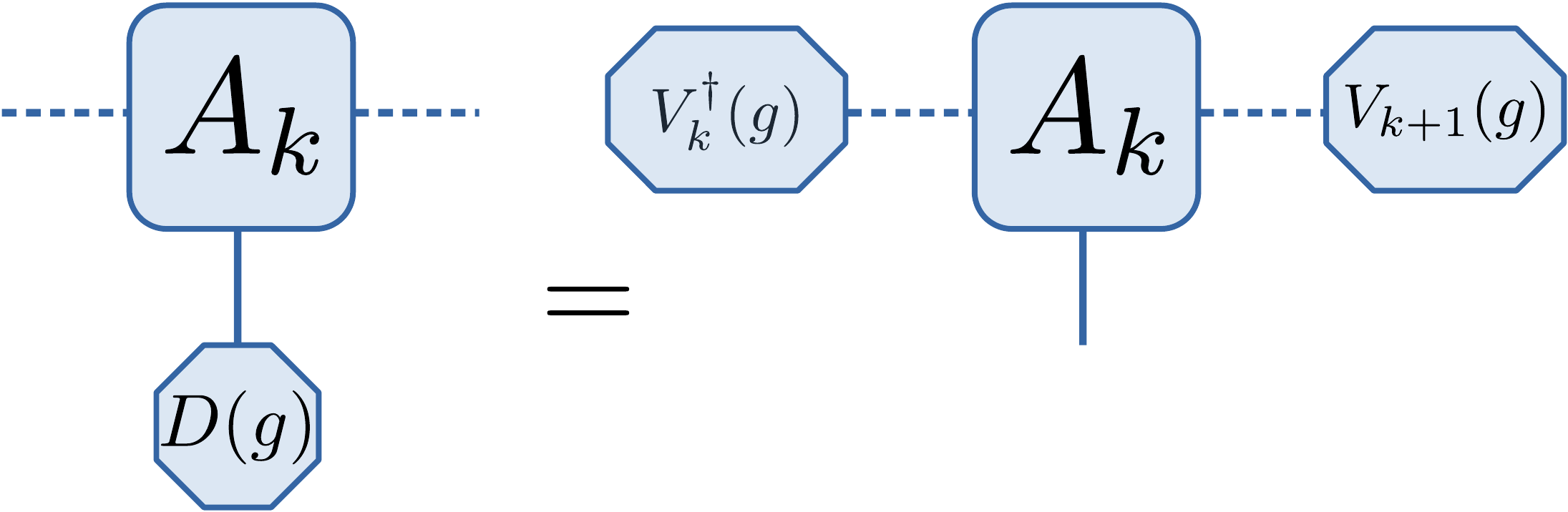}
    	\caption{$G$ transformation of f-site translation invariant MPS basis states}
    	\label{fig:MPS_ftranslation_internal}
    \end{figure}
    \begin{eqnarray}
    \sum_{n=1}^{|J|} D(g)_{mn} A_k^n = V_{k}^\dagger(g) A_k^m V_{k+1}(g)
    \end{eqnarray}
    If we now redefine the MPS grouping $k$ spins, we can write $\ket{\psi_k}$ in a translationally invariant form
    \begin{eqnarray}
    \ket{\psi_k} &=& \sum_{m_1 \ldots m_L} Tr[ L_k^{m_1 \ldots m_f} \ldots L_{k}^{m_{L-f} \ldots m_{L}}] \ket{m_1\ldots m_L}, \nonumber \\
    \end{eqnarray}
	\begin{figure}[!htbp]
		\centering
		\includegraphics[width=0.35\textwidth]{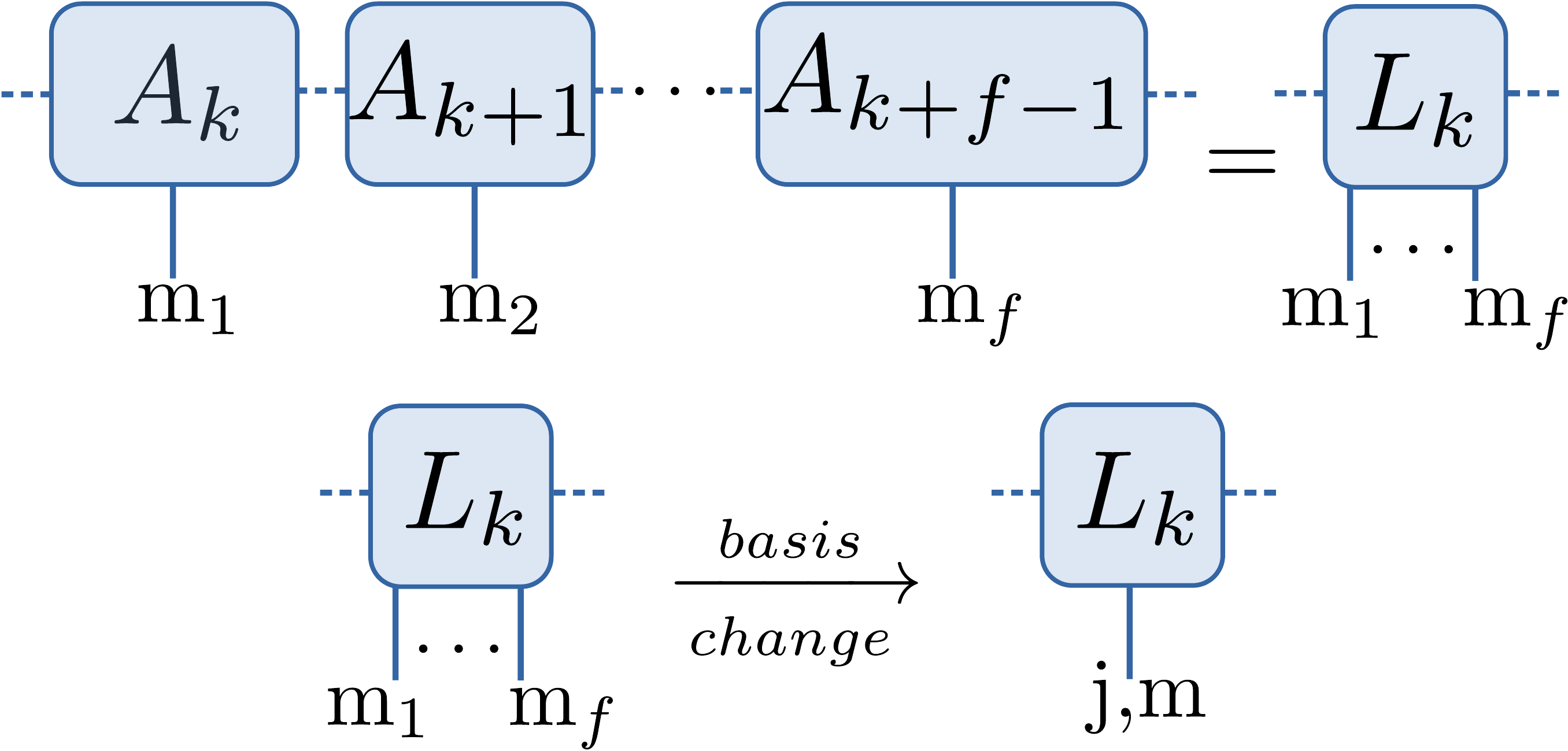}
		\caption{f-site MPS redefinition}
		\label{fig:MPSfsite_redef}
	\end{figure}    
    where, as shown in fig~(\ref{fig:MPSfsite_redef})
    \begin{equation}
    L_k^{m_1 \ldots m_f} \equiv A_k^{m_1} A_{k+1}^{m_2} \ldots A_{k+f-1}^{m_f} \nonumber
    \end{equation}
    Finally, we perform a local basis change to relabel each superspin into distinct irreps
    \begin{equation}
    \ket{m_1,m_2,\ldots,m_f} \xrightarrow{\text{basis~change}} \ket{j,m}
    \end{equation}
    using which we can write $\ket{\psi_k}$ as
    \begin{eqnarray}
    \ket{\psi_k} &=& \sum_{\{j_i,m_i\}} Tr[ L^{j_1 m_1} \ldots L^{j_{\frac{L}{f}} m_{\frac{L}{f}}}] \ket{j_1 m_1\ldots j_{\frac{L}{f}} m_{\frac{L}{f}}}\nonumber
    \end{eqnarray}
    The length of the chain is now $1/f$ the original (we assume that the length of the spin chain is a multiple of $f$) and the local spins are $|J|^f$ dimensional which we have reduced into different $j$ sectors. The change of basis block-diagonalizes the f-site $G$ transformation matrices as follows
    \begin{equation}
    \bigotimes_{l=1}^f D(g)  \xrightarrow{\text{basis change}} \bigoplus_{j \in  J^{\otimes f }} D^j(g)
    \end{equation}
    We can now impose the invariance of the injective, translation invariant basis states under $G$ transformations.
	\begin{figure}[!htbp]
		\centering
		\includegraphics[width=0.35\textwidth]{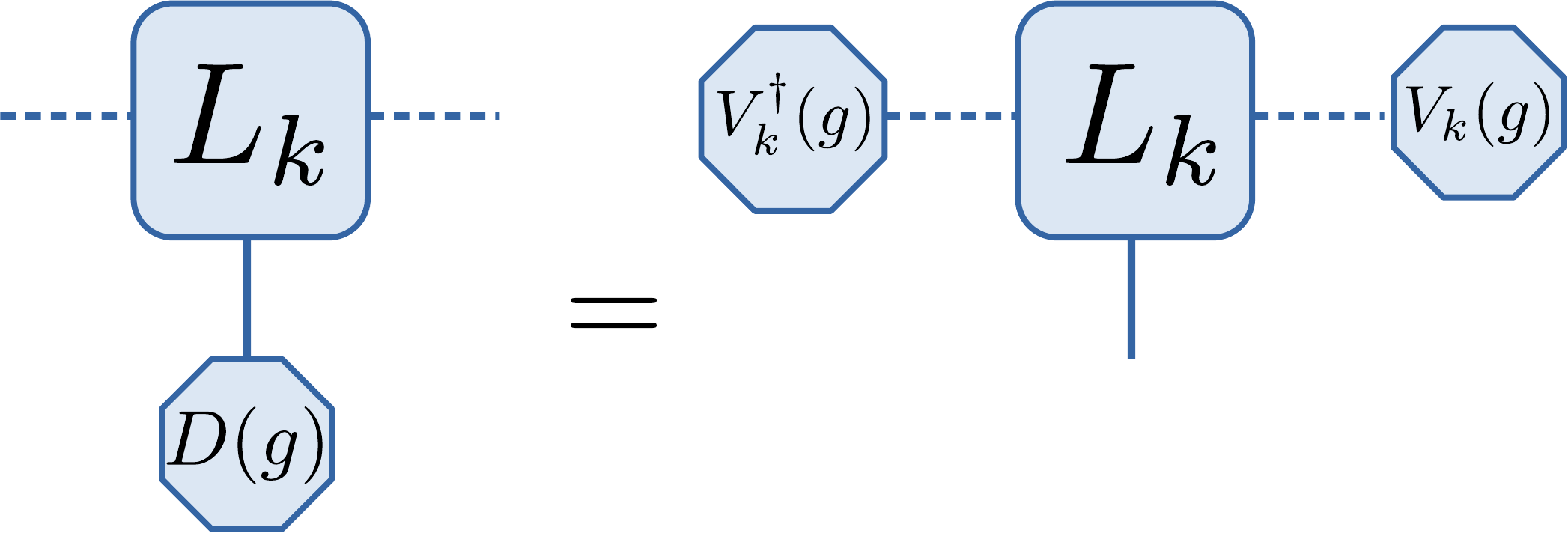}
		\caption{$G$ transformation on redefined MPS}
		\label{fig:MPSfsite_redef_internal}
	\end{figure}    
    \begin{eqnarray}
    \sum_{j \in  J^{\otimes f }} \sum_{n =1}^{|j|} D^j(g)_{m  n} L_k^{j n} = V_k^\dagger(g) L_k^{j m}V_k(g)
    \end{eqnarray}
    Once again, invoking the Wigner-Eckart theorem, we can write the matrix elements of $L^{jm}_k$ matrices as
    \begin{multline}
    \innerproduct{j_\alpha,m_\alpha,d_\alpha|L_k^{jm}}{j_\beta,m_\beta,d_\beta} \\= \innerproduct{j,m;j_\beta,m_\beta}{j_\alpha,m_\alpha} \innerproduct{j_\alpha,d_\alpha||L_k^{jm}|}{j_\beta,d_\beta}
    \end{multline}
    For the CG coefficients $\innerproduct{j,m;j_\beta,m_\beta}{j_\alpha,m_\alpha}$ to exist, following the logic of the proof of theorem~(\ref{th:General_LSM}), $[j]$ should correspond to the identity element of $\hgc$. In other words, $j$ should be a linear irrep of $G$. Thus, for the CG coefficients to exist, $f$ should be such that $J^{\otimes f}$ has only linear irreps. Determining this value of $f$ is easy if the group $\hgc$ and its element corresponding to the class of projective irreps $[J]= \omega_J$ is known. Since $\hgc$ is a finite Abelian group, $\omega_J$ has some \emph{finite order} i.e. a finite integer $F$ such that $\omega^F =e$. Thus, if and only if $f$ is a multiple of $F$, $j$  corresponds to linear irreps and the CG coefficients exist. Any other $f$ would produce $j$ that corresponds to a non-trivial projective irrep and the arguments of theorem~(\ref{th:General_LSM}) would again lead to the vanishing of the MPS.
    
    To summarize, the spin chain can have a gapped spectrum if single-site lattice translation symmetry $\bZ$ is broken down to $nF\bZ$ producing a $nF$-fold GSD where $F$ is the order of the group element of $\hgc$ that characterizes the projective representation of the physical spin and $n \in \bZ$. 
    
    We now consider the possibility that on-site symmetry $G$ is spontaneously broken to one of its subgroups, $H \subset G$ with translation invariance intact. In this case, the GSD is $|G/H|$ and any vector in the  ground space can be given the following non-injective MPS representation
    \begin{equation}
    \label{eq:H_noninjective}
    \ket{\psi} = \sum_{m_1 \ldots m_L} Tr[\Theta C^{m_1}C^{m_2}\ldots C^{m_L}] \ket{m_1 \ldots m_L}
    \end{equation} 
    where $C^m$ is block diagonal with $|G/H|$ blocks
    \begin{equation}
    C^m = \begin{pmatrix}
    A_1^m & 0 & \ldots &0 \\
    0 & A_2^m & \ldots &0 \\
    \vdots & \vdots & \ddots  &0 \\
    0 &  \ldots & 0& A_{|G/H|}^m
    \end{pmatrix},
    \end{equation}
	$A^m_k$ are injective and   
	\begin{equation}
	\Theta = \begin{pmatrix}
	\alpha_1  \mathbb{1} & 0 & \ldots &0 \\
	0 & \alpha_2  \mathbb{1} & \ldots &0 \\
	\vdots & \vdots &\ddots  &0 \\
	0 &   \ldots & 0& \alpha_{|G/H|}  \mathbb{1}        
	\end{pmatrix} \text{ with } \sum_{k=1}^{|G/H|}|\alpha_k|^2 = 1
	\end{equation}
	Again, $\Theta$ commutes with the matrices $C^{m_i}$ and hence its location in the MPS string is irrelevant. That the above MPS is not injective can be seen by observing that $\Gamma_I(\Theta M) = \Gamma_I( M \Theta)$ even when $\Theta M  \neq  M \Theta$ for any $I$~\cite{Tzu-Chieh}. The ground state vector of eq~(\ref{eq:H_noninjective}) can be rewritten as
	\begin{eqnarray}
	\label{eq:H_injective}
	\ket{\psi} &=& \sum_{k=1}^{|G/H|} \alpha_k \ket{\psi_k}  \\
	\ket{\psi_k} &=& \sum_{m_1\ldots m_L} Tr[ A_{k}^{m_1} A_{k}^{m_2} \ldots A_{k}^{m_L}] \ket{m_1 \ldots m_L} .
	\end{eqnarray}
	This tells us that the $|G/H|$ dimensional ground space can be spanned by $|G/H|$ basis states $\ket{\psi_k}$,  orthonormal in the thermodynamic limit, which have an injective, translation invariant MPS representation. The elements of the coset $\rho \in G/H$ faithfully permute the basis states
	\begin{equation}
	\ket{\psi_k} \xrightarrow{\rho} \ket{\psi_{\rho(k)}}.
	\end{equation}
	On the other hand, $h \in H$ is still a symmetry. This means that there exist bond representations, $V(h)$ such that the following holds
	\begin{equation}
		\label{eq:H_MPS_onsite_translation_invariance}
		\sum_{n=1}^{|J|} D(h)_{mn} A_k^{n} = V^\dagger(h) A_k^{m} V(h)
	\end{equation}
	However, at this stage, we must note that while the original physical spin $J$ was a projective \emph{irrep} of $G$, it may be a reducible representation of  $H$. Thus, we choose a basis such that when we restrict the group elements to the subgroup $H$, the matrix elements of $D(h)$ are block diagonalized in $H$ irreps. 
	\begin{equation}
	D(h) = \bigoplus_{j \in J} D^j(h)
	\end{equation}
	and eq~(\ref{eq:H_MPS_onsite_translation_invariance}) reduces to 
	\begin{equation}
	\label{eq:H_MPS_onsite_translation_invariance}
	\sum_{j \in J}\sum_{n=1}^{|j|} D^j(h)_{mn} A_k^{j n} = V^\dagger(h) A^{j m} V(h)
	\end{equation}
	Once again, we choose a basis so that $V(h)$ too are block-diagonalized into $H$ irreps all of which are constrained by injectivity to belong to the same projective class in lemma~(\ref{lemma:G_Bond_hilbert_space}) and invoke the Wigner-Eckart theorem to write down the matrix elements of $A^{jm}_k$ as
	\begin{multline}
		\innerproduct{j_\alpha,m_\alpha,d_\alpha|A_k^{jm}}{j_\beta,m_\beta,d_\beta} \\= \innerproduct{j,m;j_\beta,m_\beta}{j_\alpha,m_\alpha} \innerproduct{j_\alpha,d_\alpha||A^{jm}_k|}{j_\beta,d_\beta}
	\end{multline}

Now, it is easy to determine which $H \in G$ allows the above CG coefficients. Even though the original physical spins transformed as non-trivial projective irreps of $G$, when restricted to the group elements of $H$, the $H$ irreps, $j \in J$ may or may not be projective as determined by the element of $\hgcinput{H}$ that labels their class $[j]$. If this element, $\omega_j$ is trivial, and the $H$ irreps are all linear, the CG coefficients do not vanish and a gapped spectrum with $|G/H|$ degenerate ground space with MPS representation is allowed. Such a subgroup $H$ \emph{always} exists (The trivial group $\{e\}$ being a trivial example. See appendix~(\ref{app:projective representations}). On the other hand, if $\omega_j$ is a non-trivial element and $j$ is a projective representation of $H$, the arguments of theorem~(\ref{th:General_LSM}) lead to the vanishing of the MPS. It is easy to generalize the above considerations to the case when there is a combination of on-site and translation symmetry breaking leading to more interesting possibilities. 

\end{proof}

To summarize, we have found the following possibilities for the spectrum of the spin chain with projective on-site and lattice translation symmetries:
\begin{enumerate}
	\item Gapless
	\item Gapped with translation symmetry $\bZ$ spontaneously broken to $f \bZ$ with the on-site symmetry $G$ unbroken, where $f$ is an integer multiple of the order of the group element of $\hgc$ that characterizes the projective irrep of the physical spin.
	\item Gapped with on-site symmetry $G$ spontaneously broken to $H$ and translation invariance $\bZ$ unbroken such that when restricted to $H$, the representation of the group is linear. 
	\item A combination of the above two. 
\end{enumerate}

\section{LSM, DQC and SPT}
\label{sec:DQC}
In his original approach to prove his famous `conjecture'~\cite{Haldane2016ground} about the nature of the spectrum for integer and half-odd-integer QAFs, Haldane actually considered the `XXZ' deformation of the QAF of eq~(\ref{eq:QAF}):
\begin{equation}
\label{eq:XXZ}
H =J \sum_i \left(S^x_{i} S^x_{i+1} + S^y_{i} S^y_{i+1} + \Delta S^z_{i} S^z_{i+1} \right)
\end{equation}
which matches eq~(\ref{eq:QAF}) when $\Delta =1$. For $\Delta \neq 1$, spin rotation symmetry is explicitly broken. It was found that for $\Delta >0 $ there existed two phases- a gapless phase ($0<\Delta <1$) and a N\'{e}el phase ($\Delta >1$) separated by QAF point $\Delta = 1$. As suggested by the lack of a trivial phase, it turns out that LSM constraints operate here too and it is interesting to analyze the protecting on-site symmetry that forbids a trivial phase. The residual $SO(2)$ symmetry corresponding to spin rotations of the form 
\begin{equation}
R(\theta):\begin{pmatrix}
S^x \\
S^y \\
S^z
\end{pmatrix} \mapsto 
\begin{pmatrix}
\cos \theta & -\sin \theta & 0 \\
\sin \theta & ~~\cos \theta & 0 \\
0 & 0 & 1
\end{pmatrix}
\begin{pmatrix}
S^x \\
S^y \\
S^z
\end{pmatrix}
\end{equation}
is an important analytical tool at the heart of bosonization techniques~\cite{Haldane2016ground,Oshikawa_LSM_PhysRevLett.84.1535} and defining `twist operators'~\cite{tasaki2018lieb} which have helped prove LSM theorems. However, it is easy to see that the mere presence of $SO(2)$ alone does not forbid a trivial phase- we can add a term $\lambda \sum_i S^z_i$ to the Hamiltonian~(\ref{eq:XXZ}) and tune $\lambda >> \{J,\Delta\}$ and get a trivial phase with product state ground state. However, (\ref{eq:XXZ}) has an additional symmetry 
\begin{equation}
\label{eq:Px}
P_x:\begin{pmatrix}
S^x \\
S^y \\
S^z
\end{pmatrix} \mapsto \begin{pmatrix}
~~S^x \\
-S^y \\
-S^z
\end{pmatrix}
\end{equation}
which does not commute with the SO(2) rotations as $P_x R(\theta) P_x =R(-\theta)$ and forms the group $O(2)$ which is projectively realized. It is this symmetry that forbids terms like $\lambda \sum_i S^z_i$ and the existence of the trivial phase. In a different work, Haldane~\cite{Haldane_PhysRevB.25.4925_NeelDimer} added a next-nearest neighbour Heisenberg coupling to the XXZ model of eq~(\ref{eq:XXZ}) as follows
\begin{equation}
H = \sum_i J_1 \left(S^x_{i} S^x_{i+1} + S^y_{i} S^y_{i+1} + \Delta S^z_{i} S^z_{i+1} \right) + J_2 \vec{S}_i.\vec{S}_{i+1} \nonumber
\end{equation}
and discovered that there existed a continuous phase transition between the N\'{e}el and valence-bond-solid (VBS) phases. These phases break incompatible symmetries (meaning the residual symmetry of one phase is not the subgroup of the other) and are hence \emph{Landau-forbidden}. This is a one dimensional version of the famous \emph{deconfined quantum critical} (DQC)~\cite{Senthil_DQC1490} N\'{e}el-to-VBS transition in two dimensions.

In fact, we could break all continuous symmetries and only retain the $\ztzt$ symmetry generated by $P_x$, $P_y$ and $P_z$ for LSM to hold~\cite{TasakiOgata_LSM_2019} where 
\begin{equation}
P_y:\begin{pmatrix}
S^x \\
S^y \\
S^z
\end{pmatrix} \mapsto \begin{pmatrix}
-S^x \\
~~S^y \\
-S^z
\end{pmatrix}, ~
P_z:\begin{pmatrix}
S^x \\
S^y \\
S^z
\end{pmatrix} \mapsto \begin{pmatrix}
-S^x \\
-S^y \\
~~S^z
\end{pmatrix}
\end{equation}
and $P_x$ is as defined before in (\ref{eq:Px}). A model that retains only this symmetry is the XYZ spin chain 
\begin{equation}
\label{eq:XYZ}
H =  \sum_i \left( J_x S^x_i S^x_{i+1} + J_y  S^y_i S^y_{i+1} + J_z S^z_i S^z_{i+1}\right) 
\end{equation} 
which,  depending on relative strengths of $J_x,J_y,J_z$ contains three distinct N\'{e}el ordered phases corresponding to $\ztzt$ being broken down to one of its three $\ztwo$ subgroups generated by one of $P_x,P_y,P_z$. Since the three $\ztwo$ groups are not subgroups of each other, these N\'{e}el-to-N\'{e}el phase transitions are also Landau-forbidden. Recently,  Mudry et al~\cite{Mudry_DQC_PhysRevB.99.205153} investigated a next-nearest-neighbor deformation of the XYZ model of eq~(\ref{eq:XYZ}) and again found a rich phase diagram including a N\'{e}el-to-VBS transition.

The modern understanding that connects DQC and LSM is that both Landau forbidden transitions as well as the absence of a trivial phase is a consequence of the kinematics of the physical systems - the presence of an anomaly~\cite{ElseThorngren_AnomalyLSM2019topological,Cho_AnomalyLSM_PhysRevB.96.195105,Metlitski_Thorngren_DQCAnomalies_PhysRevB.98.085140}. While anomalies in quantum field theories manifest themselves as an obstruction to \emph{gauging}, in lattice models, they present themselves by obstructing a trivial phase and allowing Landau-forbidden transitions. Another context where anomalies are present are on the boundaries of symmetry protected topological (SPT) phases where it was found that when the system belongs to a non-trivial SPT phase, in the presence of boundaries, the ground state can never be unique and gapped. Thus, LSM, DQC and SPT phenomena are related by anomalies. 

\section{Summary}
We have provided an elementary proof of a class of LSM theorems in 1D using the matrix-product-state representation of spin chains. We prove that the presence of on-site projective symmetries and translation invariance forbids the existence of a trivial phase where the spin chain has a unique, gapped ground state. 

The generalization of on-site unitary symmetries to also include time-reversal and reflections should be a straight forward and interesting extension of this work. Even more interesting is the extension of these results to fermionic systems where the notion of a trivial phase is subtle and the presence of fermion parity would allow more complex versions of projective representations and LSM theorems. We leave these directions for future work.

\section{Note added}
After the completion of this work and its appearance on arXiv, I was informed of refs~(\onlinecite{ChenGuWen_old_PhysRevB.83.035107,SchuchGarciaCirac_PhysRevB.84.165139})~(see also the discussion on Physics StackExchange on \href{https://bit.ly/32NkFxa}{bit.ly/32NkFxa}) where the main idea of proving the LSM constraint using matrix product states was already discussed. However some novel technical aspects like the Wigner Eckart theorem as well as substantial details are presented in this paper which are unavailable in the above references. I am grateful to Ruben Verrsen for bringing this to my attention.

\section{Acknowledgments}
I acknowledge useful discussions with Subhro Bhattacharjee, Ashvin Vishwanath, Masaki Oshikawa, Diptiman Sen, Ganapathi Bhaskaran, Vijay Shenoy and especially Hal Tasaki whose talk brought my attention to this problem. I am grateful to Tzu-Chieh Wei for a careful reading of the manuscript, his suggestions and pointing out an error. This work was initiated and completed during the program `Thermalization, Many Body Localization and Hydrodynamics' (ICTS/hydrodynamics2019/11) at ICTS-TIFR and partially drafted during the programs `Novel Phases of Quantum Matter' (ICTS/topmatter2020/01) at ICTS-TIFR and `Gapless fermions: from Fermi liquids to strange metals' at Max Planck Institute for the Physics of Complex Systems. I also acknowledge funding from the Simon's foundation through the ICTS-Simons postdoctoral fellowship. 

\newpage 
\bibliography{references}{}

\newpage

\appendix
\section{Proof of intermediate lemmas used in the main text}
\label{app:proofs}

\begin{lemma}
	\label{lemma:SO(3)_Bond_hilbert_space}
	The bond Hilbert space of an injective MPS invariant under spin rotations and lattice translations has either only integer or only half-odd-integer spins.
\end{lemma}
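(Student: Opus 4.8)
The plan is to prove the lemma by showing that the bond representation $V(g)$, lifted to the universal cover $SU(2)$, sends the nontrivial central element to a \emph{scalar}. This single fact encodes the whole statement, because that central element acts as $(-1)^{2j}$ on the spin-$j$ sector of the bond space, so its being a scalar is equivalent to all sectors having the same value of $(-1)^{2j}$.

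First I would work on $SU(2)$ rather than $SO(3)$. For the half-odd-integer physical spin the matrices $D(g)$ form a genuine (spin-$J$) representation of $SU(2)$, so the invariance condition~(\ref{eq:SO(3)_MPS_onsite_translation_invariance}) is read for $g\in SU(2)$; lattice translation invariance is what lets me take the tensors $A^m$ and the gauge matrices $V(g)$ site-independent in the first place. By the fundamental theorem for injective MPS, $V(g)$ is a projective representation of $SU(2)$, and since $H^2(SU(2),U(1))$ is trivial I may rephase $V(g)\mapsto\beta(g)V(g)$ to make $V$ a \emph{genuine} representation without disturbing~(\ref{eq:SO(3)_MPS_onsite_translation_invariance}), as the phase cancels between $V^\dagger$ and $V$ on the right-hand side.

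Next I specialize~(\ref{eq:SO(3)_MPS_onsite_translation_invariance}) to the central element $z=-\mathbb{1}\in SU(2)$, the $2\pi$ rotation. Since $D(z)=(-1)^{2J}\mathbb{1}$, writing $X\equiv V(z)$ (unitary, so $X^{-1}=X^\dagger$) gives $X^{-1}A^m X=(-1)^{2J}A^m$, and iterating over a product of length $I$ yields $X^{-1}(A^{m_1}\cdots A^{m_I})X=(-1)^{2JI}(A^{m_1}\cdots A^{m_I})$. The crux is to take $I$ \emph{even}: then $(-1)^{2JI}=1$ for every $J$, so $X$ commutes with every length-$I$ product. Here injectivity enters. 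The condition~(\ref{eq:injectivity}) is equivalent, under the trace pairing, to the statement that for $I>I^*$ the matrices $\{A^{m_1}\cdots A^{m_I}\}$ span the full matrix algebra $M_D(\bC)$ acting on the $D$-dimensional bond space. Choosing an even $I>I^*$, the operator $X$ commutes with a spanning set, hence with all of $M_D(\bC)$, so by Schur's lemma $X=V(z)$ is a scalar.

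Finally I would read off the conclusion from representation theory: in a genuine $SU(2)$ representation $V(z)$ acts as $(-1)^{2j}\mathbb{1}$ on the spin-$j$ isotypic block, so $X$ being a single scalar forces every occurring $j$ to share the same sign $(-1)^{2j}$, i.e. all integer or all half-odd-integer. Equivalently, $V$ descends to a projective representation of $SO(3)$ of a definite class in $\hgcinput{SO(3)}=\ztwo$. The main obstacle is the middle step: correctly converting injectivity into the commutant statement and, in tandem, insisting on an even length $I$ so that the sign $(-1)^{2JI}$ trivializes. This parity bookkeeping is exactly what makes $X$ scalar; an odd length would instead reproduce the obstruction underlying Theorem~(\ref{th:SO(3)LSM}).
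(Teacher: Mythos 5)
Your proof is correct and follows essentially the same route as the paper: both hinge on evaluating the invariance condition at the $2\pi$ rotation to obtain $X A^m X^{-1} = (-1)^{2J} A^m$ with $X = V(2\pi)$ acting as $(-1)^{2j}$ on the bond sectors, and then letting injectivity force $X$ to be proportional to the identity. The only cosmetic difference is that the paper exhibits two distinct matrices $M$ and $(-1)^{2JI} X M X$ with the same image under $\Gamma_I$ to contradict injectivity directly, whereas you invoke the equivalent algebra-spanning characterization of injectivity (at even $I$) together with Schur's lemma; the mathematical content is identical.
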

\begin{proof}
	Let us prove this by assuming the contrary i.e. the MPS is injective and the bond Hilbert space contains both integer and half-odd-integer irreps. Consider the spin $J$ rotation matrix for $2 \pi$ rotation about any axis, $\hat{n}$.
	\begin{equation}
		D(\hat{n}, 2 \pi) = (-1)^{2j} \mathbb{1}
	\end{equation}
	 Substituting in eq~(\ref{eq:SO(3)_MPS_onsite_translation_invariance}, this gives us
	\begin{equation}
	\label{eq:2 pi rotation }
	(-1)^{2J} A^m = X A^m X
	\end{equation}
	where, $J$ is the physical spin which can be integer or half-odd-integer, and 
	\begin{equation}
	X = V(\hat{n}, 2 \pi) = \begin{pmatrix}
	\mathbb{1} & 0 \\
	0 & -\mathbb{1}
	\end{pmatrix}.
	\end{equation}
	We have chosen the basis of the bond space such that the top block of $V(g$) corresponds to integer $j$ and the bottom corresponds to half-odd-integer $j$ and $X^2 = \mathbb{1}$. Now, observe that
	\begin{multline}
	\Gamma_I(X M X) \\ = \sum_{m_1 \ldots m_I} Tr(X M X A^{m_1} \ldots A^{m_I}) \ket{m_1, m_2, \ldots m_I} \\
	= \sum_{m_1 \ldots m_I} Tr( M X A^{m_1}X \ldots XA^{m_I}X) \ket{m_1, m_2, \ldots m_I} \\
	= (-1)^{2JI} \sum_{m_1 \ldots m_I} Tr( M A^{m_1} \ldots A^{m_I}) \ket{m_1, m_2, \ldots m_I} \\=
	(-1)^{2JI} \Gamma_I(M) 
	\end{multline}
	Thus, for any length $I$, we can always find two distinct matrices $M$ and $N=(-1)^{2JI} X M X$, such that $ \Gamma_I(M) = \Gamma_I(N)$ thus violating injectivity and contradicting the original assumption.
\end{proof}

\begin{lemma}
	\label{lemma:G_Bond_hilbert_space}
	The bond Hilbert space of an injective MPS invariant under on-site $G$ symmetry and lattice translations contains projective irreducible representations of a single class.
\end{lemma}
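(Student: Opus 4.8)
The plan is to argue by contradiction in direct analogy with the proof of lemma~(\ref{lemma:SO(3)_Bond_hilbert_space}), with the $2\pi$ rotation replaced by a suitable central element of the covering group. Suppose the injective MPS is invariant under the on-site $G$ action and translations, but that its bond Hilbert space decomposes into projective irreps belonging to (at least) two distinct classes $\omega_1\neq\omega_2\in\hgc$. I will then produce, for every length $I$, two distinct bond matrices with the same image under $\Gamma_I$, contradicting the injectivity condition~(\ref{eq:injectivity}).

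The first step is to manufacture the analogue of $X=V(\hat n,2\pi)$. Using fact 4 of the review (see appendix~(\ref{app:projective representations})), I pass to the covering group $\tilde G$, a central extension $1\to A\to\tilde G\to G\to 1$ with $A\cong\hgc$, on which every projective irrep of $G$ of class $\omega$ becomes a genuine linear irrep whose restriction to the central subgroup $A$ is a character $\chi_\omega$. The key input is that the pairing $(a,\omega)\mapsto\chi_\omega(a)$ between $A$ and $\hgc$ is non-degenerate, so distinct classes are separated by some central element: since $\omega_1\neq\omega_2$, there is $a\in A$ with $\chi_{\omega_1}(a)\neq\chi_{\omega_2}(a)$. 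Lifting the physical irrep $J$ and the bond representation to $\tilde G$ and evaluating the invariance relation at this $a$, the physical matrix $D$ becomes the scalar $\chi_{\omega_J}(a)\mathbb{1}$ by Schur's lemma, so that
\begin{equation}
X^{-1}A^m X = \kappa\, A^m,\qquad X:=\tilde V(a),
\end{equation}
for a single phase $\kappa$ whose precise value is immaterial, where $X$ is block diagonal with entry $\chi_\omega(a)\mathbb{1}$ on each class-$\omega$ block. By the choice of $a$, $X$ is \emph{not} a scalar matrix.

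The second step mirrors the $SO(3)$ computation. Commuting $X$ past the string $A^{m_1}\cdots A^{m_I}$ picks up an overall $\kappa^{-I}$, giving $\Gamma_I(X^{-1}MX)=\kappa^{-I}\Gamma_I(M)$. Hence for any $M$ the matrix $N:=\kappa^{I}X^{-1}MX$ satisfies $\Gamma_I(N)=\Gamma_I(M)$; and because $X$ is not a scalar, the map $M\mapsto\kappa^{I}X^{-1}MX$ fails to be the identity for every $I$, so a genuinely distinct pair $M\neq N$ exists at each length. This violates injectivity, forcing all bond irreps into a single class.

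I expect the main obstacle to lie in the representation-theoretic input of the first step rather than the $\Gamma_I$ manipulation, which is routine. Specifically, one must (i) set up the covering group and justify that a single central element detects the projective class of each block as a scalar, and (ii) verify the non-degeneracy of the class/character pairing so that any two distinct classes are separated by some $a\in A$. This non-degeneracy is precisely what upgrades the naive conclusion — that the classes merely lie in one coset of the subgroup $\langle\omega_J\rangle$, which is all that closure of the fusion $J\otimes j_\beta\to j_\alpha$ would yield — to the sharp statement that they coincide. A minor point to check is that the phase ambiguity in lifting the projective invariance relation to $\tilde G$ only rescales $\kappa$ and therefore drops out of the argument.
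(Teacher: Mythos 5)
Your proof is correct, and its second half --- the identity $\Gamma_I(X^{-1}MX)=\kappa^{-I}\Gamma_I(M)$ together with the observation that conjugation by a non-scalar unitary can never reduce to multiplication by a phase --- is exactly the paper's argument. Where you differ is in the construction of the intertwiner $X$. The paper builds it directly from the cocycles: it sets $X=V(g)V(h)V^{\dagger}(gh)=\bigoplus_{[\omega]}\omega(g,h)\mathbb{1}$, derives $\omega_J(g,h)A^m=X^{\dagger}A^mX$ from eq.~(\ref{eq:MPS_onsite_translation_invariance}), and then picks $(g,h)$ at which two distinct bond classes take different values (such a pair exists because two cocycles agreeing at every $(g,h)$ are equal and hence cohomologous). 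You instead pass to the covering group and take $X=\tilde V(a)$ for a central element $a$ separating the two classes, invoking the non-degeneracy of the pairing between the center of the Schur cover and $\hgc$. The two routes are equivalent --- the paper's $X$ is, up to a phase, precisely the value of your lifted bond representation on the central element $\tilde g\tilde h\widetilde{gh}^{-1}$ --- so what your version buys is an explicit statement of the representation-theoretic input that the paper compresses into the phrase ``appropriately choosing $g$ and $h$,'' at the cost of importing covering-group machinery that the paper uses only informally (its fact~4). Your closing remark is also worth keeping: the fusion/Wigner--Eckart constraint alone would only confine the bond classes to a single coset of $\langle\omega_J\rangle$, and it is the separating central element (equivalently, the freedom to vary $(g,h)$) that sharpens this to a single class --- a distinction the paper's write-up does not draw explicitly. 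The two facts you flag as needing verification (central elements act by the class character on each irreducible block; distinct classes are separated by some $a\in A$) are standard properties of the Schur cover, so I do not regard them as gaps, and your handling of the phase ambiguity in lifting the invariance relation is correct since only the non-scalarity of $X$, not the value of $\kappa$, enters the contradiction.
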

\begin{proof}
	Let us prove this by assuming the contrary i.e. the MPS is injective and the bond Hilbert space contains projective irreps of various classes corresponding to elements of the group $\omega \in \hgc$. Let us consider the following transformation of the physical spin (corresponding to projective irrep $J$ of class $\omega_J \in \hgc$) 
	\begin{equation}
	D(g)D(h)D^{\dagger}(gh) = \omega_J(g,h) \mathbb{1}
	\end{equation}
Substituting in eq~(\ref{eq:MPS_onsite_translation_invariance}), this gives us
\begin{equation}
\omega_J(g,h) A^m = X^\dagger A^m X 
\end{equation}
where
\begin{equation}
X = V(g)V(g)V^\dagger(gh) 
= \bigoplus_{[\omega] \in \hgc} \omega(g,h) \mathbb{1}.
\end{equation}
We have chosen the basis for the bond Hilbert space so that it is block-diagonal with each class of projective irreps grouped together into blocks as shown above. Observe that
\begin{multline}
\Gamma_I(X M X^\dagger) \\= \sum_{m_1 \ldots m_I} Tr(X M X^\dagger A^{m_1} \ldots A^{m_I}) \ket{m_1, m_2, \ldots m_I} \\
= \sum_{m_1 \ldots m_I} Tr( M X^\dagger A^{m_1}X \ldots X^\dagger A^{m_I}X) \ket{m_1, m_2, \ldots m_I} \\
= (\omega_J(g,h))^I \sum_{m_1 \ldots m_I} Tr( M A^{m_1} \ldots A^{m_I}) \ket{m_1, m_2, \ldots m_I} \\=
(\omega_J(g,h))^I \Gamma_I(M) 
\end{multline}
Thus, for any length $I$, by appropriately choosing $g$ and $h$(such that $\omega(g,h)$ is non-trivial following, say, the prescription listed in the appendix of the paper by by Pollmann and Turner~\cite{PollmannTurner_PhysRevB.86.125441}), we can always find two distinct matrices $M$ and $N=(\omega^*_J(g,h))^I X M X^\dagger$, such that $ \Gamma_I(M) = \Gamma_I(N)$ thus violating injectivity and contradicting the original assumption.
\end{proof}

\section{Some comments on projective representations}
\label{app:projective representations}
Given a group G, a linear representation is an assignment of invertible, complex matrices $M(g)$ for each element $g \in G$ compatible with group multiplication i.e. $M(g) M(h) = M(gh)$. A projective representation on the other hand is an assignment of invertible, complex matrices $V(g)$ for each element $g \in G$ such that group multiplication holds upto on overall complex phase $\omega$
\begin{equation}
    V(g) V(h) = \omega(g,h) V(g.h).
\end{equation}
Associativity of $g \in G$ constrains the $\omega$s to follow the following \emph{cocycle} condition
\begin{equation}
	\label{eq:2cocycle}
    \omega(g,h) \omega(gh,l) = \omega(g,hl) \omega(h,l)
\end{equation}
Projective representations fall into equivalence classes where the equivalence relation comes from the effect of rephasing $V(g)$ 
\begin{equation}
V(g) \sim  \bar{V}(g) = \theta_g V(g) 
\end{equation}
which, using eq~(\ref{eq:2cocycle}) gives us 
\begin{equation}
\label{eq:coboundary}
\omega(g,h) \sim \omega(g,h) \frac{\theta_g \theta_h}{\theta_{gh}}
\end{equation}
The equivalence classes of solutions to the set of equations (\ref{eq:2cocycle}) subject to the equivalence relation of eq~(\ref{eq:coboundary}) correspond to equivalence classes of projective representations. These classes have an Abelian group structure under multiplication- the second group-cohomology group, $\hgc$.

Let us now understand a fact stated in the main text- \emph{given a projective representation of $G$, there exists a subgroup $H$ of $G$ such that when restricted to $H$, the representation is linear to $H$}. That there always exists such a group is easy to establish- the trivial group has no projective representations and is always a subgroup of any $G$. However, it is also interesting to understand the condition for the subgroup to be non-trivial. An elegant way to understand this is the following- $H$ is a subgroup of $G$ implies there exists an \emph{injective} group homomorphism,
\begin{equation}
i:H \hookrightarrow G.
\end{equation}
As explained above, a projective representation of $G$ is represented by some collection of 2-cocycles $\omega(g_1,g_2)$ such that $[\omega] \in \hgc$. Now, given the group homomorphism above, we can obtain a collection of 2-cocycles of $H$ by \emph{pullback}
\begin{equation}
i^*\omega(h_1,h_2) = \omega(i(h_1),i(h_2)).
\end{equation}
In other words, the homomorphism $i$ induces the following group homomorphism
\begin{equation}
\label{eq:pullback}
i^*: \hgc \rightarrow \hgcinput{H}.
\end{equation}
The $H$ we seek is such that the image of $[\omega] \in \hgc$ is the trivial element in the map of eq~(\ref{eq:pullback}). It is clear that $H$ being the trivial group works trivially. Let us look at one example. When $G$ is $SO(3)$, $\hgc \cong \ztwo$. If we consider $H$ to be either $O(2)$ or $\ztzt$, $\hgcinput{H}\cong \ztwo$ and $i^*$ merely induces the identity map
\begin{equation}
i^*:\ztwo \hookrightarrow \ztwo.
\end{equation}
However, if $H$ is $SO(2)$, $\ztwo$ or the trivial group $1$, $\hgcinput{H} \cong 1$ and $i^*$ induces the trivial map 
\begin{equation}
i^*: \ztwo \rightarrow 1
\end{equation} 
Thus, restricting the projective (half-odd-integer) irreps to either $O(2)$ or $\ztzt$ subgroups still leaves them a as projective representations of the subgroups but restricting them to $SO(2)$, $\ztwo$ or the trivial group leaves them as linear representations.

\end{document}